\def\eqref#1{equation~\ref{#1}}
\def\1{\bm{1}}
\DeclareMathAlphabet{\mathsfit}{\encodingdefault}{\sfdefault}{m}{sl}
\SetMathAlphabet{\mathsfit}{bold}{\encodingdefault}{\sfdefault}{bx}{n}
\DeclareMathOperator*{\argmax}{arg\,max}
\lstdefinelanguage{JSON}{
    keywords={true, false, null},
    keywordstyle=\color{blue}\bfseries,
    stringstyle=\color{purple},
    numberstyle=\color{red}, 
    commentstyle=\color{gray},
    morecomment=[l]{//},
    morecomment=[s]{/*}{*/},
    morestring=[b]",
    sensitive=true,
    showstringspaces=false
}
\lstdefinestyle{jsonstyle}{
    language=JSON,
    basicstyle=\small\ttfamily,
    stringstyle=\color{purple},
    keywordstyle=\color{blue},
    showstringspaces=false,
    breaklines=true,
    frame=tb, 
    backgroundcolor=\color{black!5},
}
\definecolor{codegray}{rgb}{0.5,0.5,0.5}
\definecolor{codepurple}{rgb}{0.58,0,0.82}
\definecolor{backcolour}{rgb}{0.95,0.95,0.95}
\definecolor{commentgreen}{rgb}{0,0.5,0}
\lstdefinestyle{pythonstyle}{
    backgroundcolor=\color{backcolour},   
    commentstyle=\color{commentgreen},     
    keywordstyle=\color{blue},             
    stringstyle=\color{codepurple},        
    basicstyle=\ttfamily\small,            
    breakatwhitespace=false,
    breaklines=true,
    captionpos=b,
    keepspaces=true,
    numbers=none,
    numbersep=5pt,
    showspaces=false,
    showstringspaces=false,
    showtabs=false,
    tabsize=4,
    language=Python
}
\newcommand{\pmval}[2]{#1 \scriptsize{$\pm#2$}}
\def\myalgo{\textsc{CATMark}}
\def\kgw{\textsc{KGW}}
\def\sweet{\textsc{SWEET}}
\def\ewd{\textsc{EWD}}
\newtheorem{lemma}{Lemma}[section] 
\newtheorem{theorem}{Theorem}
\title{CATMark: A Context-Aware Thresholding Framework for Robust Cross-Task Watermarking in Large Language Models}
\author{
  Yu Zhang\textsuperscript{\rm 1}\thanks{Equal contribution.},
  Shuliang Liu\textsuperscript{\rm 1}\footnotemark[1],
  Xu Yang\textsuperscript{\rm 2},
  \textbf{Xuming Hu}\textsuperscript{\rm 1}\thanks{Corresponding author.}
\\
        \textsuperscript{\rm 1} {The Hong Kong University of Science and Technology (Guangzhou)} \\
    { \textsuperscript{\rm 2} {South China University of Technology}} \\
     \texttt{\href{mailto:zymatrix@whu.edu.cn}{zymatrix@whu.edu.cn}},
     \texttt{\href{mailto:xuminghu@hkust-gz.edu.cn}{xuminghu@hkust-gz.edu.cn}}}
\begin{document}

\maketitle
\begin{abstract}
Watermarking algorithms for Large Language Models (LLMs) effectively identify machine-generated content by embedding and detecting hidden statistical features in text. However, such embedding leads to a decline in text quality, especially in low-entropy scenarios where performance needs improvement. Existing methods that rely on entropy thresholds often require significant computational resources for tuning and demonstrate poor adaptability to unknown or cross-task generation scenarios. We propose \textbf{C}ontext-\textbf{A}ware \textbf{T}hreshold watermarking ($\myalgo$), a novel framework that dynamically adjusts watermarking intensity based on real-time semantic context. $\myalgo$ partitions text generation into semantic states using logits clustering, establishing context-aware entropy thresholds that preserve fidelity in structured content while embedding robust watermarks. Crucially, it requires no pre-defined thresholds or task-specific tuning. Experiments show $\myalgo$ improves text quality in cross-tasks without sacrificing detection accuracy.
\end{abstract}

\section{Introduction}\label{sec:intro}

The expanding capabilities of Large Language Models (LLMs) have enabled their application in increasingly diverse and sophisticated generation tasks~\cite{zhao2025surveylargelanguagemodels}, from acting as AI agents that produce structured data to solving complex scientific problems and writing functional code~\cite{chen2021evaluating,guo2024deepseekcoderlargelanguagemodel}. However, this proliferation of high-quality, machine-generated content poses formidable challenges for authenticity verification~\cite{burrus2024unmasking,ayoobi2024esperanto} and the prevention of misuse~\cite{ayoobi2023looming,dammu2024claimver}. Text watermarking, which embeds imperceptible statistical signals into generated text, has emerged as a promising solution for establishing content provenance~\cite{liu2024survey,chen2023watme,yoo2023advancing}. The dominant paradigm involves augmenting the model's output logits; a foundational method, for example, partitions the vocabulary into ``green'' and ``red'' lists and adds a positive bias to the logits of green-listed tokens to embed a detectable signature~\cite{kirchenbauer2023watermark}.

Initial research quickly identified a primary limitation of this approach: its performance degrades significantly in low-entropy contexts, such as code generation, where modifying deterministic tokens can corrupt functional correctness. To address this, subsequent work has focused on entropy-aware adaptations. $\sweet$~\cite{lee2023wrote} introduced a static entropy threshold, selectively applying the watermark only to high-entropy tokens to preserve low-entropy syntactic structures. Building on this, EWD~\cite{lu2024entropy} refined the detection process by assigning weights to tokens proportional to their entropy, improving sensitivity without a hard threshold. While these methods marked important progress for single-domain tasks, they addressed only part of the problem.

The primary remaining challenge, which we identify as the core of our work, is the absence of a robust watermarking solution for cross-task generation scenarios. Modern LLMs are increasingly deployed in complex workflows where they must seamlessly switch between different generation modalities within a single output sequence~\cite{shoshan2025mammalmolecularaligned}. For instance, an AI agent may generate executable code (low entropy) interwoven with natural language documentation (high entropy), or a mathematical reasoning agent might produce structured formulas alongside explanatory text. Existing methods are ill-equipped for such heterogeneous outputs. A single, static entropy threshold, as used in $\sweet$, is fundamentally inadequate; a threshold calibrated for natural language will be too permissive for code, harming its correctness, while one set for code will be too restrictive for text, rendering the watermark undetectable. This forces a compromise that fails to satisfy the requirements of either task~\cite{liu2024adaptive,chen2023watme}. Furthermore, detection schemes that treat the entire text uniformly, like EWD, cannot adapt to these sharp, context-driven shifts in entropy, diluting the statistical signal and weakening detectability.

To address this critical gap, we propose the \textbf{Context-Aware Threshold Watermark} ($\myalgo$), a novel framework that dynamically adapts its watermarking strategy to the local context of the generated text. Instead of relying on a single, global threshold, $\myalgo$ employs a lightweight token categorization mechanism to identify the current generation context (e.g., code versus natural language) and computes a distinct, tailored entropy threshold for each. This allows it to selectively apply a strong watermark to high-entropy text while preserving the integrity of structured, low-entropy code, all within a single, continuous output. This adaptive approach eliminates the need for manual, task-specific tuning and ensures robust performance across diverse and mixed-modality generation tasks. Our contributions are threefold:
\begin{itemize}
    \item \textbf{Cross-Task Robustness:} We are the first to systematically investigate and address the challenge of watermarking in cross-task generation scenarios. We introduce a quality-aware evaluation framework to rigorously assess performance in settings that mix modalities, such as code generation with inline documentation.

    \item \textbf{Dynamic Threshold Automation:} We introduce a novel dynamic thresholding mechanism that first categorizes tokens into context-specific clusters based on the KL divergence of their logit distributions from learned prototypes. It then automatically computes adaptive entropy thresholds using quantiles of the historical entropy distribution within each category, enabling real-time adaptation to varying textual complexities without manual intervention.

    \item \textbf{Theoretical and Empirical Validation:} We establish a theoretical lower bound for the detection z-score under our adaptive thresholding and provide extensive empirical evidence of its superiority. Our method significantly improves both output quality and detection robustness, achieving top-tier results such as a pass@1 score of 82.3\% on HumanEval and a 100\% AUROC on StackEval, simultaneously outperforming baseline methods across all cross-task benchmarks.
\end{itemize}

By solving the limitations of the static watermarking paradigm, $\myalgo$ facilitates the practical and safe deployment of LLMs in the complex, multi-faceted applications where they are increasingly utilized, ensuring reliable content provenance without compromising functional integrity.
\vspace{-1em}
\section{Related Work}

\paragraph{Watermarking in Language Models.}
Watermarking techniques aim to embed imperceptible signatures into model outputs for origin verification and misuse prevention \cite{kirchenbauer2023watermark, hou2023semstamp}. Red/green list-based methods modify sampling distributions to increase the frequency of selected tokens, achieving high detectability but often degrading generation quality \cite{tu2023waterbench, chang2024postmark}. Fixed-threshold strategies like \kgw and $\sweet$ \cite{lee2023wrote, kirchenbauer2023watermark} embed watermarks in tokens exceeding a preset entropy value, but are brittle in low-entropy settings such as code generation or structured data outputs \cite{baldassini2024cross, he2024watermarking}. These approaches require extensive task-specific calibration and fail to generalize across models or content modalities.
\vspace{-0.5em}
\paragraph{Entropy-Adaptive and Low-Entropy Watermarking.}
Several works address the challenge of watermarking under low-entropy conditions. STA-1 and STA-M \cite{mao2024watermark} introduce unbiased sampling and dynamic acceptance strategies, improving robustness without modifying logits, yet still depend on fixed green list proportions. Entropy-weighted detection methods (EWD) \cite{lu2024entropy, raz2024authorship} enhance sensitivity by assigning entropy-proportional token weights at detection, but do not adapt watermark embedding during generation. Similarly, $\sweet$ \cite{lee2023wrote} statically filters high-entropy tokens to preserve code correctness, though it lacks task-adaptive thresholding. While \citet{liu2024adaptive, yoo2023advancing} explore adaptive entropy-aware embedding, they either rely on external estimation modules or precomputed thresholds, which limit scalability.
\vspace{-0.5em}
\paragraph{Cross-Task and Multimodal Generalization.}
Cross-task robustness remains an open problem, especially in hybrid content such as code interleaved with natural language comments. Methods like POSTMARK \cite{chang2024postmark}, RE-MARK-LLM \cite{zhang2024remark}, and VLPMarker embed watermarks without model access or via backdoor triggers, showing promise across tasks, but exhibit sensitivity to distribution shifts and entropy inconsistencies \cite{christ2024undetectable, nie2024securing}. Surveys by Liu et al.~\cite{liu2024survey} and Liang et al.~\cite{liang2024watermarking} highlight the shortcomings of static-threshold watermarking in dynamic and multimodal scenarios, especially in code generation tasks where entropy can fluctuate sharply across tokens \cite{baldassini2024cross, hu2023unbiased}. Furthermore, multilingual and cross-lingual settings introduce semantic drift, making consistent watermark preservation harder \cite{huang2023towards, gloaguen2025towards}.

To address these limitations, we propose Context-Aware Threshold Watermarking ($\myalgo$), a framework that dynamically adjusts the entropy threshold based on historical token entropy distributions. Unlike prior works relying on fixed or manually tuned thresholds \cite{lee2023wrote, kirchenbauer2023watermark}, $\myalgo$ leverages quantile-based entropy sampling to select watermark positions in real time, enhancing robustness across tasks and models. The weighted detection mechanism further amplifies signal strength in low-entropy contexts, ensuring watermark effectiveness without compromising text quality \cite{liu2024adaptive, chang2024postmark}.
\vspace{-1em}
\section{Method}
\begin{figure*}[t!] 
\begin{center}
\includegraphics[width=1 \linewidth]{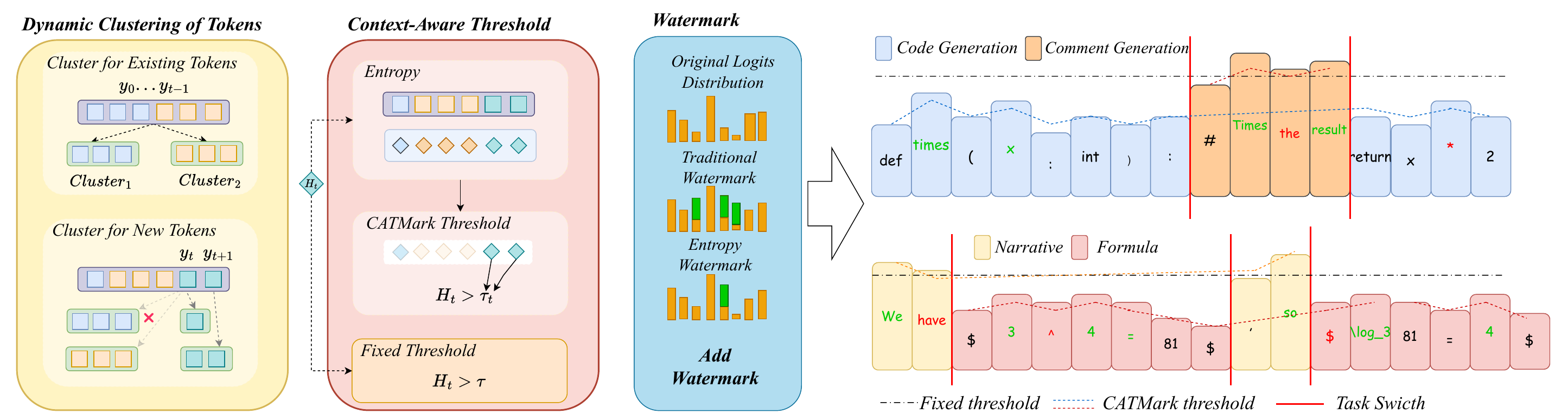}  
\end{center}
\caption{
Comparison between static-threshold watermarking and our context-aware, cluster-based thresholding method, $\myalgo$. Our approach dynamically clusters generated tokens based on logit similarity (left panel), then computes a context-specific entropy threshold per cluster using historical entropy sequences (middle panel). Tokens whose entropy exceeds the adaptive threshold are watermarked (right panel). In the token sequence visualizations, rectangle height represents normalized entropy.
}
\label{fig:algo_process}
\end{figure*}

We propose $\myalgo$, a context-aware watermarking framework that builds upon the foundation of statistical watermarking. Similar to established methods, its core principle is to embed a detectable signal by subtly modifying the token sampling process. This is achieved by pseudorandomly partitioning the vocabulary $\mathcal{V}$ at each generation step $t$ into a "green list" ($\mathcal{G}_t$) and a "red list" ($\mathcal{R}_t$) based on a secret key and the preceding context. A positive bias, $\delta$, is then added to the logits of all tokens in $\mathcal{G}_t$, increasing their probability of being selected.

By selectively embedding watermarks only in high-entropy tokens within each semantic context, $\myalgo$ achieves robust detectability while minimizing perturbation to structured content such as source code.
Unlike static thresholding methods, our approach eliminates the need for manual tuning and adapts to varying content types within a single sequence. 
\subsection{Generation} The watermark generation process is outlined in Algorithm \ref{alg:generation}. For a tokenized prompt $\mathbf{x} = \{x_0, \dots, x_{M-1}\}$ and a partially generated sequence $\mathbf{y}_{[:t]} = \{y_0, \dots, y_{t-1}\}$, the model first computes the entropy $H_t$ of the next-token probability distribution.

A core innovation of $\myalgo$ is the dynamic categorization of generation states. We maintain a set of active categories $\mathcal{C}=\{C_1,\dots,C_K\}$, each defined by a prototype logits vector $\mathbf{p}_k \in \mathbb{R}^{|\mathcal{V}|}$. At step $t$, we compute a similarity score $d_k$ between the current logits vector $\mathbf{s}_t$ and each prototype $\mathbf{p}_k$ using the negative KL divergence:
\begin{equation}
\label{eq:kl_similarity}
d_k := -\mathrm{KL}\left( \sigma(\mathbf{s}_t) \parallel \sigma(\mathbf{p}_k) \right) = \sum_{i=1}^{|\mathcal{V}|} \sigma(\mathbf{s}_t)_i \log \frac{ \sigma(\mathbf{p}_k)_i }{ \sigma(\mathbf{s}_t)_i },
\end{equation}
where $\sigma$ denotes the softmax function. 
The category $C_{k^*}$ with maximum similarity $d_{k^*}$ is selected. If $d_{k^*} \geq \alpha$ (where $\alpha$ is a similarity threshold), the token is assigned to $C_{k^*}$ and the prototype is updated via cumulative moving average:
\begin{equation}\label{eq:proto_avg}
\mathbf{p}_{k^*} \leftarrow \frac{N_{k^*} \mathbf{p}_{k^*} + \mathbf{s}_t}{N_{k^*} + 1}, \quad N_{k^*} \leftarrow N_{k^*} + 1,
\end{equation}
where $N_{k^*}$ is the sample count for category $C_{k^*}$. Otherwise, a new category $C_{K+1}$ is initialized with $\mathbf{p}_{K+1} = \mathbf{s}_t$ and $N_{K+1} = 1$.

Once the active category $C_k$ is determined, its entropy history $H_{h,k}$ is used to compute the threshold $\tau_k$. Let $\rho$ represent a predefined minimum historical length. The threshold $\tau_k$ is calculated as:
\begin{equation}\label{eq:gen_tau}
\tau_k =
\begin{cases}
0 & \text{if } |H_{h,k}| \leq \rho, \\
Q_{H_{h,k}}\left(f(\mu_{H_{h,k}})\right) & \text{otherwise},
\end{cases}
\end{equation}
where $\mu_{H_{h,k}} = \frac{1}{|H_{h,k}|} \sum_{H \in H_{h,k}} H$ is the mean historical entropy for category $k$. When $|H_{h,k}| \leq \rho$, watermarks are applied unconditionally ($\tau_k = 0$). Otherwise, $\tau_k$ is set to the quantile of the entropy history corresponding to the cumulative probability $q = f(\mu_{H_{h,k}})$, i.e., the value satisfying:
\begin{equation}
\frac{1}{|H_{h,k}|} \left| \left\{ H' \in H_{h,k} \mid H' \leq \tau_k \right\} \right| = q = f(\mu_{H_{h,k}})
\end{equation}
where $f$ is a function that maps the mean entropy to a cumulative probability value. In our implementation, we specifically choose $f(x) = e^{-x}$. The rationale for this choice and its empirical validation are discussed in Appendix~\ref{sec:func}.

Finally, the vocabulary is partitioned into green and red lists with proportion $\gamma$. For tokens where $H_t > \tau_k$, a constant bias $\delta$ is added to the logits of green-listed tokens. Low-entropy tokens ($H_t \leq \tau_k$) are sampled without modification.

\subsection{Detection}
The detection algorithm is detailed in Algorithm~\ref{alg:detection}. Since cluster assignments are unavailable at detection time, the process operates on the full sequence but uses entropy-based weighting to focus on regions where the watermark was most likely embedded.

Detection follows a statistical hypothesis testing approach. The null hypothesis ($H_0$) is that the text is natural and contains no watermark, meaning the number of green tokens should be statistically consistent with random chance.

Given a token sequence $y = \{y_0, \dots, y_{N-1}\}$, the objective is to detect the presence of a watermark. Similar to the generation phase, the entropy $H_t$ is computed for each token $y_t$. The entropy sequence for all $N$ tokens is denoted as $H = \{H_0, \dots, H_{N-1}\}$. The detection threshold $\tau$ is calculated as:
\begin{equation}\label{eq: det_tau}
\tau = Q_H\left(f(\mu_H)\right),
\end{equation}
where $\mu_H = \frac{1}{N} \sum_{i=0}^{N-1} H_i$ is the mean entropy of the sequence and $f$ is the function defined previously.

Inspired by EWD~\cite{lu2024entropy}, the influence of a token on the detection outcome is modeled as positively correlated with its entropy. For each token $y_t$ with an entropy value $H_t > \tau$, its weight $W_t$ is defined as a function of its entropy:
\begin{equation}\label{eq:weigth}
W_t = w(H_t),
\end{equation}
where $w$ is a weighting function, which we set as $w(x) = x$.

The detection process proceeds as follows: First, the model's logits for each token are computed to obtain its entropy $H_i$. Next, a set of indices $\mathcal{I} = \{i \mid H_i > \tau \}$ is identified, corresponding to all tokens eligible for watermarking. For each token in this set, the green list $\mathcal{G}$ is reconstructed using the detection key and preceding tokens. Finally, the observed weighted sum of green tokens, $|s|_G$, is aggregated.

The $z$-score measures how far the observed sum of green token weights deviates from the expected sum under the null hypothesis. A high $z$-score indicates it is unlikely the text is natural, leading to the rejection of $H_0$ and detection of the watermark. The $z$-score is computed over the set $\mathcal{I}$:
\begin{equation}\label{eq:z}
z = \frac{|s|_G - \gamma \sum_{i \in \mathcal{I}} W_i}{\sqrt{\gamma (1-\gamma) \sum_{i \in \mathcal{I}} W_i^2}},
\end{equation}
where $|s|_G = \sum_{i \in \mathcal{I}, y_i \in \mathcal{G}} W_i$ is the observed weighted sum of green tokens. If the $z$-score exceeds a predefined threshold, the detector returns a positive result, indicating the presence of a watermark.

\subsection{Theoretical Analysis of Detectability}
$\myalgo$ achieves a provably higher lower bound on the watermark detection $z$-score than the baseline method, $\ewd$, thereby enhancing detectability. Theorem~\ref{theorem:1} formalizes this improvement. It demonstrates that by selectively excluding low entropy tokens which under specific conditions contribute negatively to the signal myalgo establishes a more robust statistical test. Our theoretical analysis employs spike entropy, a variant of entropy introduced by \cite{kirchenbauer2023watermark} to quantify this effect. The full proof is provided in Appendix~\ref{sec: proof}.

\begin{theorem}\label{theorem:1}
Given a token sequence $y = \{y_0,\dots,y_{N-1}\}$ generated by a watermarked LLM, let $(S_0, \dots, S_{N-1})$ be the corresponding sequence of spike entropies. If a token $y_j$ satisfies the low-entropy condition
\begin{equation}\label{eq:critical_spike_entropy}
    S_j < \gamma + (1-\gamma)e^{-\delta}
\end{equation}
then excluding this token from the z-score calculation, as is done in $\myalgo$, results in a higher lower bound on the z-score compared to including it, as in $\ewd$. Here, $\gamma$ is the green-list ratio and $\delta$ is the positive logit bias.
\end{theorem}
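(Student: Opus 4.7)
The plan is to compare the $z$-score formula~(\ref{eq:z}) evaluated over the detection index set that \emph{includes} position $j$ (the $\ewd$ case) against the same set with $j$ removed (the $\myalgo$ case, which drops the low-entropy token). Writing
\[
\mu := \sum_{i\in\mathcal{I}\setminus\{j\}}\bigl(W_i\mathbf{1}[y_i\in\mathcal{G}] - \gamma W_i\bigr),\qquad \sigma^2 := \gamma(1-\gamma)\sum_{i\in\mathcal{I}\setminus\{j\}}W_i^2,
\]
the $\myalgo$ $z$-score equals $\mu/\sigma$, while reinserting $y_j$ yields the $\ewd$ $z$-score $(\mu + \Delta_N)/\sqrt{\sigma^2 + \Delta_D}$ with $\Delta_N := W_j\mathbf{1}[y_j\in\mathcal{G}] - \gamma W_j$ and $\Delta_D := \gamma(1-\gamma)W_j^2 > 0$. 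The theorem therefore reduces to showing that a deterministic lower bound on $\Delta_N$ is strictly negative under condition~(\ref{eq:critical_spike_entropy}).

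First I would invoke the single-step spike-entropy guarantee of Kirchenbauer et al.~2023, which states that for a token generated with logit bias $\delta$ and spike entropy $S_j$, the probability of sampling a green-list token obeys
\[
\Pr\bigl[y_j\in\mathcal{G}\bigr] \;\geq\; \frac{\alpha\, S_j}{1 + (\alpha-1)\gamma},\qquad \alpha := e^{\delta}.
\]
Setting the right-hand side equal to $\gamma$ and solving for $S_j$ yields exactly $S_j = \gamma + (1-\gamma)e^{-\delta}$, which is the boundary value in~(\ref{eq:critical_spike_entropy}). Consequently, the assumption $S_j < \gamma + (1-\gamma)e^{-\delta}$ is equivalent to the Kirchenbauer lower bound on $\Pr[y_j\in\mathcal{G}]$ falling strictly below $\gamma$, which forces the sharpest spike-entropy-based lower bound $(\Delta_N)_L$ on $\Delta_N$ to be strictly negative.

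The final step compares the two lower-bound expressions. In the meaningful detection regime, the residual signal admits a nonnegative lower bound $\mu_L \geq 0$, and one has
\[
\frac{\mu_L}{\sigma} \;>\; \frac{\mu_L + (\Delta_N)_L}{\sqrt{\sigma^2 + \Delta_D}},
\]
since excluding $y_j$ strictly enlarges the numerator (removing the negative $(\Delta_N)_L$) and strictly shrinks the denominator (removing the positive $\Delta_D$). This is precisely the strict inequality on $z$-score lower bounds asserted by the theorem.

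The main obstacle I anticipate is converting the Kirchenbauer guarantee, which is typically stated as a bound on the expected green count aggregated over a whole sequence, into the clean per-token conditional inequality on $\Pr[y_j\in\mathcal{G}]$ that the argument uses. A secondary care point is the precise meaning of ``higher lower bound on the $z$-score'': because $\mathbf{1}[y_j\in\mathcal{G}]$ is random, the comparison must be between the deterministic lower bounds produced by the spike-entropy inequality, rather than between realized $z$-scores pointwise, so the proof should phrase each inequality at the level of these bounds throughout.
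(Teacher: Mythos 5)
Your proposal is correct and follows essentially the same route as the paper's proof: both invoke the Kirchenbauer spike-entropy lower bound on $\mathbb{P}[y_j\in\mathcal{G}]$, observe that condition~(\ref{eq:critical_spike_entropy}) is exactly the statement that this bound drops below $\gamma$ (so the token's expected contribution to the numerator bound is negative while its contribution to the variance term is positive), and conclude via the same two-step ``numerator grows, denominator shrinks'' chain of inequalities --- the paper merely phrases the exclusion for the entire low-entropy set $\mathcal{J}$ rather than a single token, and it likewise silently relies on the retained signal bound being positive, the caveat you explicitly flag as $\mu_L \ge 0$. One minor slip: your statement of the lemma omits the factor $\gamma$ in the numerator --- the bound is $\mathbb{P}[y_j\in\mathcal{G}] \ge \frac{\gamma e^{\delta}}{1+(e^{\delta}-1)\gamma}\,S_j$ --- though your subsequent algebra recovering the boundary value $\gamma+(1-\gamma)e^{-\delta}$ clearly uses the correct form.
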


\section{Experimental Setup}

This section presents a comprehensive experimental evaluation of our proposed watermarking technique for text generation. Our primary objectives are to assess (1) the preservation of output quality under watermarking and (2) the detectability of embedded watermarks. We conduct experiments using Qwen2.5-Coder-14B-Instruct \cite{hui2024qwen2}, a 14-billion-parameter instruction-tuned model optimized for code-related tasks, and Qwen2.5-14B-Instruct \cite{qwen2.5} for mathematical and programming assistant tasks.
\subsection{Tasks and Datasets.} Large Language Models (LLMs) are frequently deployed in cross-task settings; for instance, a code agent may be required to generate executable code, inline comments, and natural language explanations simultaneously. Watermarking may interfere with this multi-task generation capability. To evaluate such effects, we design two cross-task scenarios:
\paragraph{Code Generation Task.} We evaluate on two widely used benchmarks: HumanEval~\cite{chen2021evaluating} and MBPP~\cite{austin2021program}. Both datasets contain Python programming problems, test cases, and human-written reference solutions. Models are asked to perform two tasks: generate code from a problem description and generate line-by-line comments for each generated code snippet. This dual requirement enables evaluation of both functional correctness and cross-task alignment between code and natural language.
\paragraph{Question Answering Task.}We utilize the MATH-500 dataset~\cite{hendrycksmath2021}, which requires models to parse a natural language problem, provide derivations and step-by-step reasoning about formulas based on the questions, and generate a final answer. This aims to evaluate the impact of watermarking on switching between structured text and logical narrative generation tasks.
Additionally, to simulate real-world developer assistance scenarios, we employ the StackEvalbenchmark~\cite{shah2024stackevalbenchmarkingllmscoding}. StackEval comprises 925 curated questions from Stack Overflow, spanning multiple programming languages and difficulty levels (Beginner, Intermediate, Advanced),  covering code writing, debugging, code review, and conceptual understanding. We select the first 500 Intermediate-level questions to ensure both challenge and representativeness.

\subsection{Baselines and Evaluation Metrics.} For watermarking, we selected $\kgw$~\cite{kirchenbauer2023watermark}, $\sweet$~\cite{lee2023wrote}, and $\ewd$~\cite{lu2024entropy} as baseline methods. These methods embed watermarks by distorting the model’s sampling distribution. Although they have good detection performance, they also lead to a decrease in text quality. Among them, $\sweet$ proposed to selectively embed and detect watermarks by setting a static threshold for a single task, while EWD introduced the detection weight of each token when detecting watermarks.

To comprehensively evaluate performance, we employ a suite of metrics for both output quality and watermark detection. For functional tasks like code generation and mathematical reasoning, we measure correctness using the pass@$k$ metric~\citep{chen2021evaluating}, calculating the proportion of $n>k$ samples that pass all hidden test cases, with a one-shot prompt for mathematical reasoning detailed in Appendix~\ref{sec:prompt:math}. We assess generated comment quality against GPT-4o references (Appendix~\ref{sec:prompt:comment}) using word-level metrics METEOR, and the embedding-based BERTScore. Furthermore, we adopt the LLM-as-a-Judge paradigm with StackEval~\citep{shah2024stackevalbenchmarkingllmscoding, zheng2023judgingllmasajudgemtbenchchatbot}, using a powerful judge model to score outputs on a 0-3 scale for accuracy, completeness, and relevance (prompt in Appendix~\ref{sec:prompt:stackeval}); from this, we report the average score, the acceptance rate (scores $\geq 2$), and perplexity (PPL) for linguistic fluency. 
For watermark detection performance, we primarily use the Area Under the ROC curve (AUROC) as the main metric, and additionally report True Positive Rate (TPR) and F1-score under a False Positive Rate (FPR) constraint of less than 5\%.

\section{Results}\label{sec:result}

\subsection{Main Results}
\label{sec:main_results}
    \begin{table*}[th]
    \centering
    \begin{tabular}{@{}llccccc@{}}  
    \toprule
    \multicolumn{1}{c}{\multirow{2}{*}{\textbf{Datasets}}} & \multicolumn{1}{c}{\multirow{2}{*}{\textbf{Metrics}}} & \multicolumn{5}{c}{\textbf{Methods}} \\
    \cmidrule(lr){3-7}  
    \multicolumn{1}{c}{} & \multicolumn{1}{c}{} & \kgw & $\sweet$-0.6 & \text{SWEET-1.2} & \ewd & \myalgo \\ \midrule  
    \multirow{8}{*}{\textbf{\textsc{HumanEval}}}
    & \textsc{PASS@1} & \pmval{74.4}{0.2} & \pmval{81.1}{0.3} & \pmval{\textbf{82.3}}{0.4} & \pmval{74.6}{0.2} & \pmval{\textbf{82.3}}{0.1} \\
    & AUROC & \pmval{73.4}{1.1} & \pmval{94.5}{0.5} & \pmval{89.3}{0.8} & \pmval{96.4}{0.4} & \pmval{\textbf{97.0}}{0.3} \\
    & TPR & \pmval{21.3}{1.5} & \pmval{67.7}{1.2} & \pmval{43.9}{1.3} & \pmval{81.7}{0.9} & \pmval{\textbf{82.9}}{0.9} \\
    \cmidrule(lr){2-7}
    & METEOR & \pmval{23.9}{0.1} & \pmval{24.1}{0.1} & \pmval{\textbf{25.5}}{0.2} & \pmval{23.9}{0.1} & \pmval{24.2}{0.1} \\
    & BERTScore & \pmval{88.1}{0.1} & \pmval{88.1}{0.1} & \pmval{\textbf{88.2}}{0.1} & \pmval{88.1}{0.1} & \pmval{88.1}{0.1} \\ \midrule
    \multirow{8}{*}{\textbf{\textsc{MBPP}}}
    & \textsc{PASS@1} & \pmval{50.5}{0.4} & \pmval{50.9}{0.4} & \pmval{51.5}{0.5} & \pmval{50.5}{0.4} & \pmval{\textbf{51.6}}{0.5} \\
    & AUROC & \pmval{58.1}{1.8} & \pmval{91.7}{0.7} & \pmval{80.3}{1.0} & \pmval{92.5}{0.7} & \pmval{\textbf{93.4}}{0.5} \\
    & TPR & \pmval{10.4}{2.0} & \pmval{65.8}{1.5} & \pmval{33.4}{1.8} & \pmval{64.4}{1.6} & \pmval{\textbf{67.2}}{1.2} \\
    \cmidrule(lr){2-7}
    & METEOR & \pmval{10.6}{0.2} & \pmval{10.9}{0.2} & \pmval{\textbf{11.4}}{0.3} & \pmval{10.6}{0.2} & \pmval{11.1}{0.2} \\
    & BERTScore & \pmval{84.2}{0.2} & \pmval{85.1}{0.2} & \pmval{84.5}{0.2} & \pmval{84.2}{0.2} & \pmval{\textbf{85.2}}{0.2} \\ \midrule
    \multirow{4}{*}{\textbf{\textsc{MATH-500}}}
    & \textsc{PASS@1} & \pmval{68.6}{0.6} & \pmval{70.0}{0.5} & \pmval{69.4}{0.5} & \pmval{68.6}{0.6} & \pmval{\textbf{71.6}}{0.4} \\
    & AUROC & \pmval{85.0}{0.4} & \pmval{99.5}{0.1} & \pmval{94.3}{0.5} & \pmval{99.8}{0.1} & \pmval{\textbf{99.8}}{0.1} \\
    & TPR & \pmval{55.0}{1.0} & \pmval{96.6}{0.4} & \pmval{79.8}{1.1} & \pmval{99.0}{0.2} & \pmval{\textbf{99.0}}{0.2} \\ \midrule
    \multirow{5}{*}{\textbf{\textsc{StackEval}}}
    & \textsc{AVG} & \pmval{2.28}{0.05} & \pmval{2.32}{0.05} & \pmval{2.31}{0.04} & \pmval{2.29}{0.05} & \pmval{\textbf{2.72}}{0.03} \\
    & \textsc{ACR} & \pmval{90.8}{0.8} & \pmval{92.4}{0.7} & \pmval{92.4}{0.7} & \pmval{91.2}{0.8} & \pmval{\textbf{97.5}}{0.3} \\
    & PPL & \pmval{1.95}{0.02} & \pmval{1.94}{0.02} & \pmval{\textbf{1.85}}{0.03} & \pmval{1.95}{0.02} & \pmval{1.95}{0.02} \\
    & AUROC & \pmval{96.0}{0.4} & \pmval{99.9}{0.1} & \pmval{98.4}{0.2} & \pmval{99.9}{0.1} & \pmval{\textbf{100.0}}{0.0} \\
    & TPR & \pmval{85.2}{1.0} & \pmval{99.4}{0.2} & \pmval{93.0}{0.6} & \pmval{99.8}{0.1} & \pmval{\textbf{100.0}}{0.0} \\ \bottomrule
    \end{tabular}%
    
    \caption{\textbf{Main results} of different cross-tasks performance and detection capability. For metrics in StackEval, we use AVG to represent the average score and ACR to represent the acceptance rate. All methods use $\gamma = 0.5$ and $\delta = 2.0$. We vary the entropy threshold in \sweet~($0.6$ and $1.2$) to present its impact on performance. For $\myalgo$, we set $\rho$ = 5, $\alpha$ = -2.}  
    \label{tab:table_main_extended}
    \end{table*}

As demonstrated in Table~\ref{tab:table_main_extended}, $\myalgo$ achieves a superior synthesis of high-fidelity text generation and robust watermark detection, consistently outperforming baseline methods across a diverse set of cross-task benchmarks. In contrast, static threshold methods prove unable to adapt a single threshold to varied task demands. This inflexibility is evident with $\sweet$; on programming tasks, the $\sweet$-1.2 setting preserves better text quality than $\sweet$-0.6 but severely compromises watermark detection efficiency. However, for the Q\&A-oriented StackEval task, this same $\sweet$-1.2 setting becomes broadly suboptimal, proving inferior to $\sweet$-0.6 in both judged quality and detection capability. Overcoming this fundamental limitation, $\myalgo$ excels in both aspects concurrently. Our approach secures the highest or tied for highest pass@1 scores on the HumanEval, MBPP, and MATH-500 datasets and shows a substantial improvement in the LLM-as-a-Judge evaluation on StackEval with a leading average score and acceptance rate. This marked enhancement in generative quality is achieved without sacrificing security, as $\myalgo$ also yields the highest watermark detection rates, registering top AUROC and TPR values across all tasks.

\subsection{Empirical Analysis}
\label{sec:empirical_analysis}

\paragraph{Impact of Hyperparameters.}
\begin{figure}[hbt!]
\begin{center}
\begin{subfigure}[b]{0.45\linewidth}
\centering
\includegraphics[width=\linewidth]{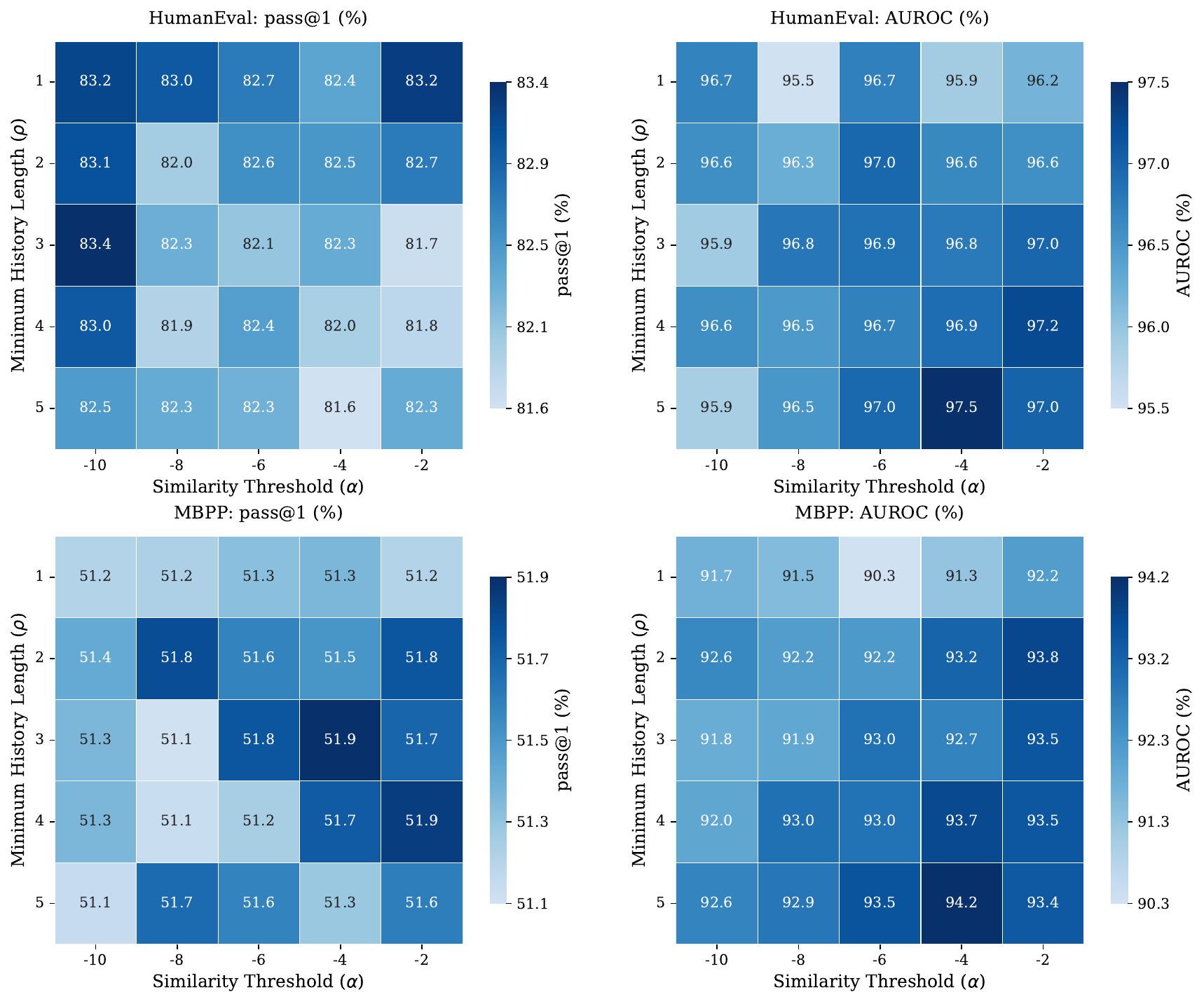}  
\caption{} 
\label{fig:line}
\end{subfigure}
\begin{subfigure}[b]{0.5\linewidth}
\centering
\includegraphics[width=\linewidth]{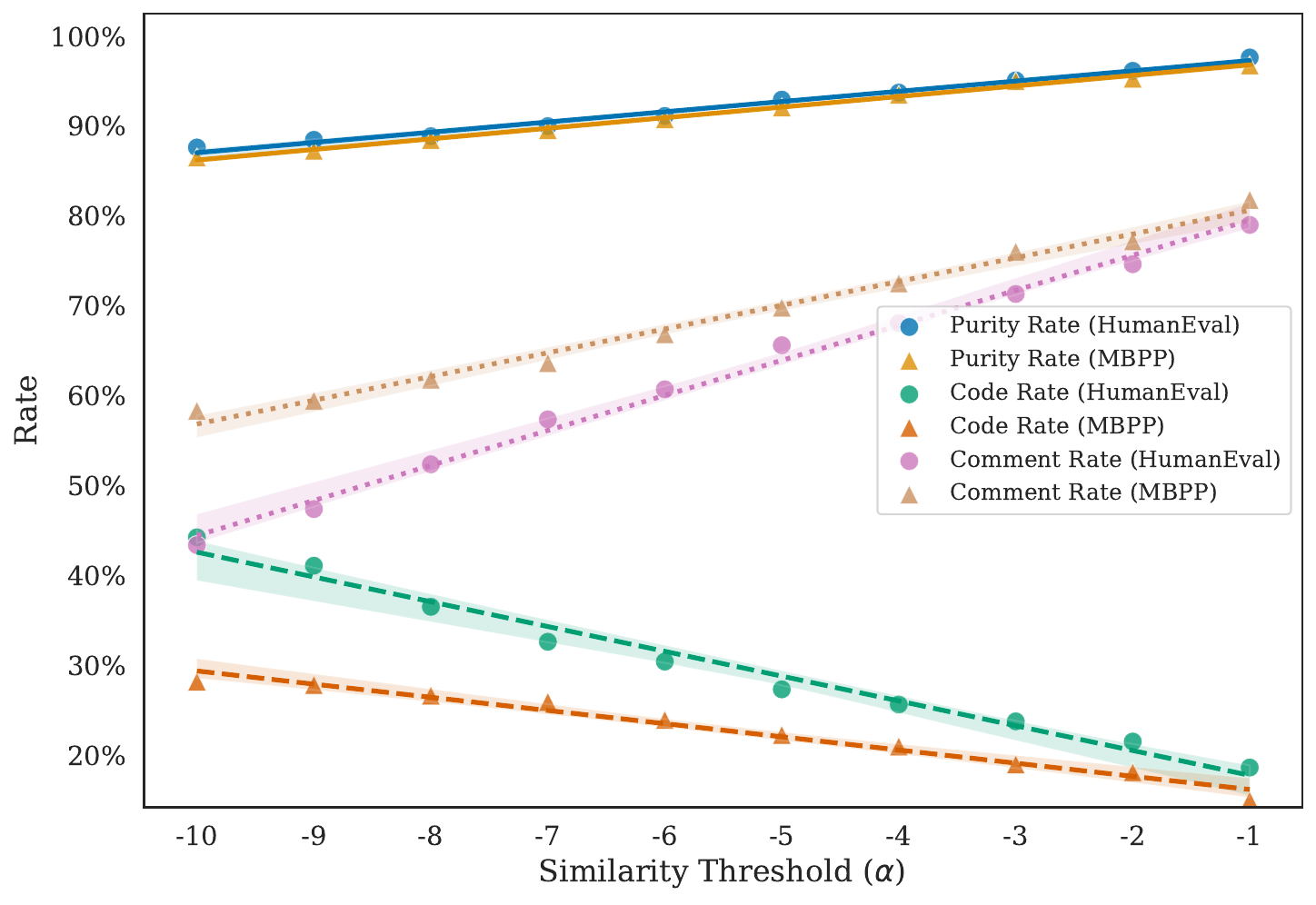}
\caption{} 
\label{fig:purity}
\end{subfigure}
\end{center}
\caption{Hyperparameter sensitivity analysis for $\myalgo$ with $\gamma = 0.5$ and $\delta = 2.0$ fixed. Subfigure~\subref{fig:line} displays performance stability on HumanEval and MBPP across similarity thresholds $\alpha \in \{-2, -4, -6, -8, -10\}$ and minimum entropy sequence lengths $\rho \in \{1, 2, 3, 4, 5\}$. Subfigure~\subref{fig:purity} illustrates the impact of $\alpha$ on the proportion of pure token categories with $\rho = 1$.}
\label{fig:fig_para}
\end{figure}

Figure~\ref{fig:fig_para} illustrates the impact of different hyperparameter combinations on the performance of $\myalgo$ across various tasks.As shown in Figure~\ref{fig:line} We employ two key parameters to constrain the watermark embedding: the similarity threshold $\alpha$, which is crucial for token classification, and the minimum entropy sequence length $\rho$, which assists in calculating the entropy threshold. To quantify stability, we compute the coefficient of variation ($C_v = \frac{\sigma}{\mu}$) for key metrics across the tested hyperparameter ranges. The $C_v$ for all metrics remained below 1\%, with the largest fluctuation being a mere 0.96\% for the AUROC on the MBPP dataset, confirming that $\myalgo$ maintains stable performance in the different configurations of parameters and thus demonstrates the robustness of our proposed method. Furthermore, Figure~\ref{fig:purity} examines the effect of the similarity threshold $\alpha$ on token classification. As the value of $\alpha$ is increased, the proportion of tokens classified into pure categories rises, which is characterized by a decrease in the pure code category rate and a concurrent increase in the pure comment category rate. This trend suggests that comment tokens exhibit lower inter-token similarity compared to code tokens.

\paragraph{Performance against Attack.}
\begin{figure}[hbt!]
\centering
\begin{subfigure}[b]{0.49\linewidth}
    \centering
    \includegraphics[width=\linewidth]{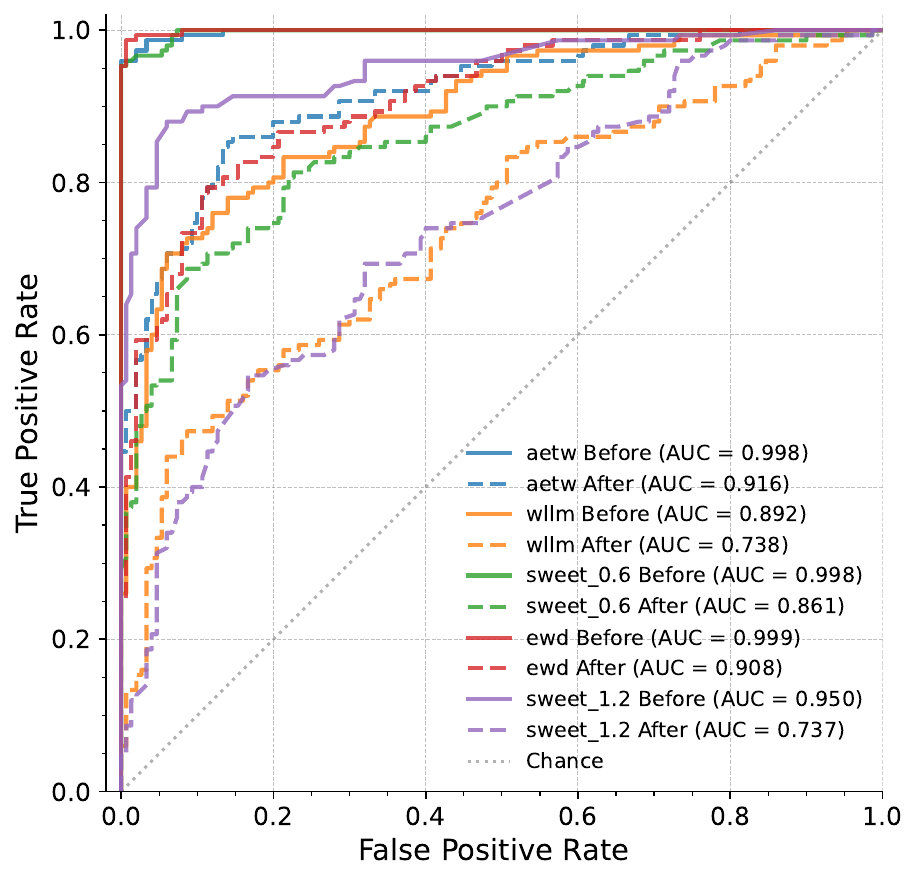}
    \caption{Back-translation attack.}
    \label{fig:back_translate}
\end{subfigure}
\begin{subfigure}[b]{0.49\linewidth}
    \centering
    \includegraphics[width=\linewidth]{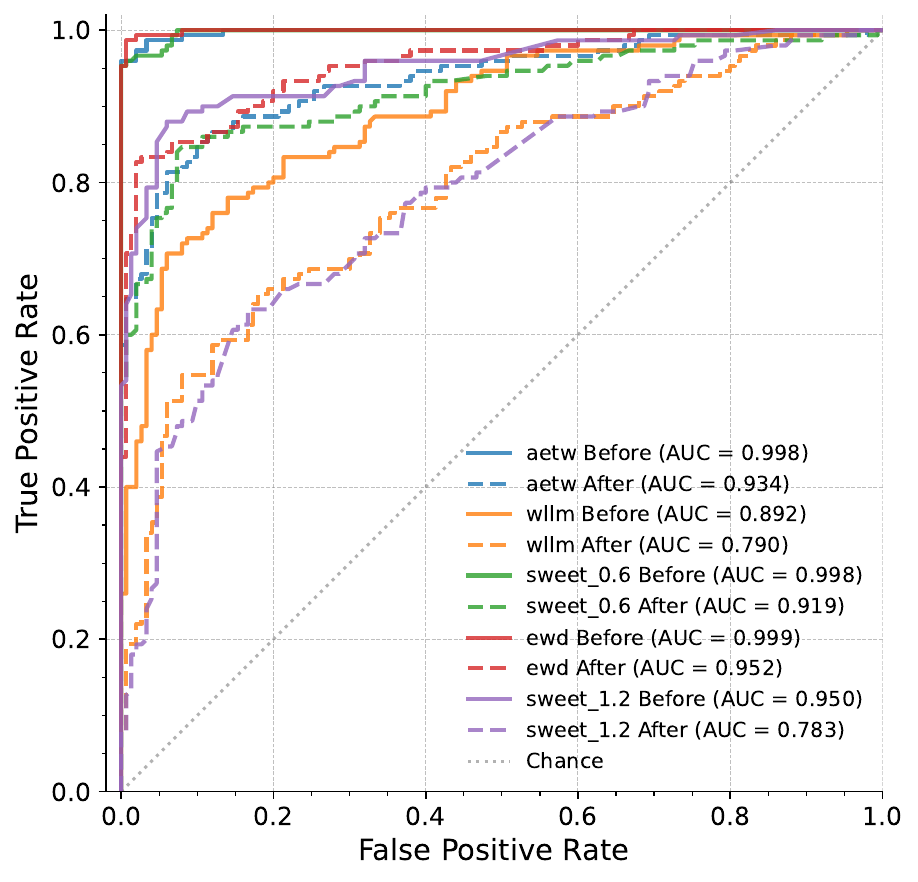}
    \caption{Paraphrasing attack.}
    \label{fig:paraphrase_attack}
\end{subfigure}
\caption{Watermark detection performance against two attacks. We set $\gamma$= 0.5 and $\delta$= 2.0 for watermark methods and $\rho$ = 5, $\alpha$ = -2 for $\myalgo$.}
\label{fig:fig_attack}
\vspace{-1em}
\end{figure}
Attackers can remove watermarks from text through rewriting attacks before the watermarked text is detected, which causes detection performance drop. We remove watermarks using back-translation and paraphrase attacks, and evaluate the detection performance of our approach compared to baseline methods. Specifically, we first use the model to generate text on the MATH-500 task. In the back-translation attack, we translate the generated text into French and then back into English. In the paraphrase attack, we rewrite the generated text using a smaller Qwen2.5-7B-Instruct model. Figure~\ref{fig:back_translate}
shows the changes in the ROC curves of different methods before and after the back-translation attack. Figure~\ref{fig:paraphrase_attack} shows the changes in the ROC curves of different methods before and after the paraphrase attack.

\paragraph{Computational Overhead.}

To evaluate the computational efficiency of our method, we conducted timing experiments on HumanEval dataset. Our approach introduces additional computational steps during generation including KL divergence calculation for token categorization and dynamic entropy thresholding, which is also required for detection. As detailed in Appendix~\ref{sec:time}, these mechanisms result in a marginal increase in generation latency. Our method achieves 33.1 tokens per second during generation, representing a 5.4\% decrease compared to baseline approaches (SWEET: 35.1, WLLM: 36.8, EWD: 36.3 tokens per second). Notably, our detection latency remains highly competitive at 1.017 seconds per sample, outperforming EWD (1.031 seconds) despite its simpler methodology. These results demonstrate that the advanced dynamic capabilities of our algorithm are achieved with only a minimal and acceptable computational cost, confirming its practicality for real-world applications.

\section{Conclusion}\label{sec:conclusion}

In this work, we introduced $\myalgo$, a dynamic framework designed to address the critical challenge of watermarking in cross-task scenarios where LLM-generated text contains heterogeneous content. By leveraging context-aware token categorization and adaptive entropy thresholding, $\myalgo$ automates the watermarking process, eliminating the need for costly, task-specific calibration. This approach effectively balances the trade-off between detection robustness and text quality preservation. Our extensive experiments demonstrate that $\myalgo$ significantly outperforms static-threshold baselines. It achieves state-of-the-art results by preserving high functional correctness while simultaneously ensuring superior detection robustness. The method’s demonstrated adaptability to hybrid content, such as code with comments, highlights its practical utility for real-world LLM applications.

Furthermore, $\myalgo$ exhibits strong resilience against common rewriting attacks, maintaining higher detectability after back-translation and paraphrasing compared to existing methods. However, we identify avenues for future improvement. The current framework, while effective, shows potential vulnerability to sophisticated redundancy injection attacks designed to artificially inflate entropy. Future work will focus on enhancing resilience to such adversarial manipulations and extending the context-aware framework to broader multimodal generation settings. By advancing adaptive watermarking strategies, this work paves the way for reliable provenance tracking of LLM outputs without compromising functional integrity, a critical step toward ethical AI deployment.



\newpage
\bibliography{iclr2026_conference}
\bibliographystyle{iclr2026_conference}
\newpage
\appendix
\section{Usage of LLM}
After writing the paper, we used the LLM to polish and modify the grammar to make the expression of the paper more natural.

\section{Preliminaries}

This section introduces the foundational concepts necessary to understand our proposed Adaptive Entropy Threshold Watermarking ($\myalgo$) method. We will cover the text generation process of Large Language Models (LLMs) and the critical role of entropy in watermarking applications.

\subsection{Large Language Model Text Generation}

Large Language Models (LLMs) typically generate text in an auto-regressive manner. Given an input prompt $x = \{x_0, \dots, x_{M-1}\}$ and a sequence of previously generated tokens $y_{<t} = \{y_0, \dots, y_{t-1}\}$, the model predicts a probability distribution for the next token $y_t$. Specifically, at timestep $t$, the model outputs a logit vector $l_t \in \mathbb{R}^{|\mathcal{V}|}$ over the entire vocabulary $\mathcal{V}$. This vector is then converted into a probability distribution $p_t$ via the Softmax function:

\begin{equation}
    p_{t,i} = \frac{e^{l_{t,i}}}{\sum_{j=1}^{|\mathcal{V}|}e^{l_{t,j}}}
\end{equation}

\noindent where $p_{t,i}$ represents the probability of the $i$-th token in the vocabulary being the next token. Finally, the model samples the next token $y_t$ from this distribution $p_t$ using a decoding strategy such as multinomial sampling or beam search.

\subsection{Spike Entropy}
To measure how spread out a distribution is, \cite{kirchenbauer2023watermark} proposed \textit{spike entropy}. Given a discrete token probability vector $p$ and a scalar $m$, define the spike entropy of $p$ with modulus $m$ is:
\begin{equation}\label{eq:spike_entropy}
    S(p, m) = \sum{\frac{p_k}{1+m_k}}
\end{equation}



\subsection{The Challenge of Low-Entropy Scenarios}
The performance of the KGW watermark is fundamentally linked to token entropy—a measure of the model's uncertainty in its prediction. We use Shannon Entropy for this measure:
\begin{equation} \label{eq:entropy}
    H_t = -\sum_{k \in \mathcal{V}} p_{t,k} \log p_{t,k}
\end{equation}
In high-entropy scenarios, the model's predictive distribution is flat, allowing the watermark bias $\delta$ to easily influence token selection. However, in low-entropy scenarios, such as code generation, the distribution is ``spiky'', with the model being highly confident about the next token. Modifying such a confident prediction can degrade text quality and functional correctness. Consequently, watermarked low-entropy text contains fewer green tokens, leading to low $z$-scores and detection failures.

\section{Watermark Algorithm of \myalgo}
\begin{samepage}
\begin{algorithm}[ht!]
\caption{Watermark Generation in $\myalgo$}
\begin{algorithmic}[1]
\STATE \textbf{Input:} Tokenized prompt $x = \{x_0, \dots, x_{M-1}\}$, generated sequence $y_{[:t]}$, similarity threshold $\alpha$, minimum history $\rho$, green proportion $\gamma$, logit bias $\delta$

\STATE \textbf{Globals:} Categories $\mathcal{C}=\{(p_k,N_k,H_{h,k})\}_{k=1}^{K}$ per task, initially empty.

\FOR{step $t = M, M+1, \dots$}
    \STATE Compute logits $\mathbf{s}_t$ and entropy $H_t = -\sum_{v} P_t(v)\log P_t(v)$
    
    \FOR{each sequence in batch}
        \STATE Compute similarity $d_k = -\text{KL}(\sigma(\mathbf{s}_t) \parallel \sigma(\mathbf{p}_k))$ for all $k$
        \STATE $k^* \leftarrow \argmax_{k} d_k$
        
        \IF{$d_{k^*} \geq \alpha$}
            \STATE Assign token to category $C_{k^*}$ 
            \STATE Update prototype: $\mathbf{p}_{k^*} \leftarrow \frac{N_{k^*} \mathbf{p}_{k^*} + \mathbf{s}_t}{N_{k^*} + 1}$
            \STATE $N_{k^*} \leftarrow N_{k^*} + 1$
        \ELSE
            \STATE $K \leftarrow K + 1$ 
            \STATE Create $C_K$ with $\mathbf{p}_K \leftarrow \mathbf{s}_t$, $N_K \leftarrow 1$, and empty $H_{h,K}$
            \STATE $k^* \leftarrow K$
        \ENDIF
        
        \STATE Append $H_t$ to $H_{h,k^*}$
        \STATE Compute $\tau_{k^*}$ via Eq.~\ref{eq:gen_tau}
        \IF{$H_t > \tau_{k^*}$}
            \STATE Add $\delta$ to logits of green-listed tokens
        \ENDIF
    \ENDFOR
    
    \STATE Sample $y_t$ from the modified distribution
\ENDFOR
\end{algorithmic}
\label{alg:generation}
\end{algorithm}
\begin{algorithm}[ht!]
\caption{Watermark Detection in $\myalgo$}
\begin{algorithmic}
\STATE \textbf{Input:} Token sequence $y = \{y_0, \dots, y_{N-1}\}$, green token proportion $\gamma$, detection key.
\STATE \textbf{Output:} Detection result (positive if watermark is present).
\FOR{each token $y_t$}
    \STATE Compute an entropy $H_t$ by Eq.~\ref{eq:entropy}.
    \STATE Update entropy sequence $H$.
\ENDFOR
\STATE Compute a mean entropy $\mu_H$.
\FOR{each token $y_t$ with $H_t > \tau$}
    \STATE Compute weight $W_t$ by Eq.~\ref{eq:weigth}.
\ENDFOR
\STATE Apply $\kgw$ detection procedure to identify green token list $G$.
\STATE Compute weighted sum of green tokens $|s|_G$.
\STATE Compute z-score $z$ by Eq.~\ref{eq:z}.
\IF{$z >$ predefined threshold}
    \STATE Return positive detection result.
\ELSE
    \STATE Return negative detection result.
\ENDIF
\end{algorithmic}\label{alg:detection}
\end{algorithm}

Algorithm~\ref{alg:generation} and Algorithm~\ref{alg:detection} demonstrate the process of applying and detecting watermarks in the $\myalgo$ algorithm, where we use Shannon entropy to calculate the entropy value. Given a probability distribution vector p of a token, the entropy value of p can be calculated using Eq.~\ref{eq:entropy}.
\end{samepage}

\section{Performance with Different Threshold Functions} \label{sec:func}
\begin{table}[ht]
\centering
\resizebox{\textwidth}{!}{%
\begin{tabular}{lcccccccc}
\toprule
\textbf{Functions} & \multicolumn{4}{c}{\textbf{HumanEval}} & \multicolumn{4}{c}{\textbf{MBPP}} \\
\cmidrule(lr){2-5} \cmidrule(lr){6-9}
 & TPR(1\%FPR) & TPR(5\%FPR) & AUROC & pass@1 & TPR(1\%FPR) & TPR(5\%FPR) & AUROC & pass@1 \\
\midrule
exp     & 70.1 & \textbf{85.4} & \textbf{97.0} & \textbf{82.9} & \textbf{48.4} & \textbf{68.6} & \textbf{93.4} & 50.7 \\
linear  & \textbf{71.9} & 85.4 & 96.7 & \textbf{82.9} & 42.6 & 63.4 & 92.1 & 50.3 \\
reciprocal & 0.0 & 0.0 & 50.0 & 81.4 & 0.0 & 0.0 & 49.7 & \textbf{51.8} \\
sigmoid & 59.1 & 84.1 & 96.6 & \textbf{82.9} & 12.0 & 67.2 & 92.8 & 50.1  \\
\bottomrule
\end{tabular}
}
\caption{Comparison of code generation and detection performance metrics (pass@1, AUC, T(F $<5$\%)) across different function of $\myalgo$ on HumanEval and MBPP datasets. We set $\gamma$= 0.5 and $\delta$=2.0 and addtionally $\rho$ = 5, $\alpha$ = -2.}
\label{tab:function}
\end{table}
To assess the influence of the threshold function on our watermarking algorithm's efficacy, we compared four candidates, including functions with a decreasing characteristic: an exponential function (\(e^{-x}\)), a linear reciprocal (\(x^{-1}\)), a sigmoid function, and a baseline using the average entropy. The results in Table~\ref{tab:function} reveal that while both decreasing functions aim to embed watermarks more selectively, their performance diverges significantly. The exponential function (\(e^{-x}\)) strikes the optimal balance, achieving an AUROC of 97.0 on HumanEval while preserving a high pass@1 score of 82.9. In contrast, the reciprocal function (\(x^{-1}\)), despite a similar design intention, fails completely (0.0\% TPR), indicating that an overly aggressive reduction in watermarking opportunities undermines detectability. The linear and sigmoid functions show intermediate but less consistent results. This confirms that the specific nature of the decreasing function is critical, with \(e^{-x}\) providing the most effective non-linear mapping for adaptive watermarking.

\section{Computational Overhead}\label{sec:time}
\begin{table}
\centering
\vspace{-5pt}
\begin{tabular}{lcccc}
\toprule
\textbf{Metric} & {\myalgo} & {SWEET} & {KGW} & {EWD} \\
\midrule
Generation (s) & 16.801 & 16.256 & 15.361 & 15.557 \\
Detection (s) & 1.017 & 0.993 & 0.836 & 1.031 \\
\midrule
Seconds/Token (gen) & 0.030 & 0.029 & 0.027 & 0.028 \\
Tokens/Second (gen) & 33.136 & 35.080 & 36.751 & 36.288 \\
\bottomrule
\end{tabular}
\caption{This table shows the average time taken to generate more than 550 tokens texts using Qwen2.5-Coder-14B-Instruct on an NVIDIA RTX A800 80GB GPU, as well as the average time taken for detection measured in seconds}
\end{table}\label{tab:time}
During generation, \myalgo takes 16.801 seconds on average—only 0.545 seconds (3.3\%) slower than $\sweet$ (16.256 s) and 1.440 seconds (9.4\%) slower than the fastest baseline, $\kgw$ (15.361 s). This minor slowdown stems from the online clustering and per-cluster entropy thresholding steps, which require lightweight similarity computations and entropy tracking. Crucially, the per-token generation latency remains nearly identical across methods: \myalgo achieves 0.030 seconds/token (33.14 tokens/s), comparable to $\sweet$ (0.029 s/token) and within 10\% of $\kgw$ (0.027 s/token). This demonstrates that our context-aware watermarking does not bottleneck the autoregressive decoding loop.

For detection, \myalgo requires 1.017 seconds—marginally slower than $\sweet$ (0.993 s) but faster than EWD (1.031 s), and only 0.181 seconds (21.7\%) slower than the most efficient detector, $\kgw$ (0.836 s). Given that detection is typically performed offline or in a verification pipeline (not in real-time generation), this sub-second latency is negligible for most applications.

\section{Proof of Theorem~\ref{theorem:1}}
\label{sec: proof}
We begin our proof with a lemma from \cite{kirchenbauer2023watermark}, which establishes a lower bound on the probability of sampling a token from the green list.

\begin{lemma}\label{lemma:1}
    Suppose a language model produces a raw probability vector $p \in (0,1)^{\mathcal{V}}$ over a vocabulary of size $|\mathcal{V}|$. The vocabulary is randomly partitioned into a green list $\mathcal{G}$ of size $\gamma |\mathcal{V}|$ and a red list of size $(1-\gamma)|\mathcal{V}|$.The logits for tokens in the green list are increased by a constant $\delta > 0$. If a token $k$ is sampled from this watermarked distribution, the probability that $k \in \mathcal{G}$ is lower-bounded by:
    \begin{equation}\label{eq:green_lower_bound}
        \mathbb{P}[k \in \mathcal{G}] \ge \frac{\gamma e^\delta}{1+(e^\delta-1)\gamma} S_k(p, \frac{\gamma e^\delta}{1+(e^\delta-1)\gamma}) = \beta S_k \nonumber
    \end{equation}
    where $S_k$ is the spike entropy of the token and we define $\beta = \frac{\gamma e^\delta}{1+(e^\delta-1)\gamma}$ for brevity.
\end{lemma}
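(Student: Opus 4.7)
My plan is to prove Theorem~\ref{theorem:1} directly, treating Lemma~\ref{lemma:1} as the only external input. First, I write the $z$-score over a detection set $\mathcal{I}$ as
\begin{equation*}
z(\mathcal{I}) = \frac{|s|_G - \gamma \sum_{i \in \mathcal{I}} W_i}{\sqrt{\gamma(1-\gamma) \sum_{i \in \mathcal{I}} W_i^2}},
\end{equation*}
with $W_i = w(H_i) > 0$ and $|s|_G = \sum_{i \in \mathcal{I},\, y_i \in \mathcal{G}} W_i$. Taking expectations and replacing each $\mathbb{P}[y_i \in \mathcal{G}]$ by its Lemma~\ref{lemma:1} lower bound $\beta S_i$ produces the deterministic lower bound
\begin{equation*}
L(\mathcal{I}) = \frac{\sum_{i \in \mathcal{I}} W_i(\beta S_i - \gamma)}{\sqrt{\gamma(1-\gamma) \sum_{i \in \mathcal{I}} W_i^2}}, \qquad \beta = \frac{\gamma e^\delta}{1+(e^\delta-1)\gamma}.
\end{equation*}
The theorem then asserts $L(\mathcal{I}\setminus\{j\}) > L(\mathcal{I})$ under the stated condition on $S_j$, so the rest of the argument is about this function.

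The second step identifies the critical spike entropy threshold in Eq.~\eqref{eq:critical_spike_entropy} with the sign-change of the per-token numerator contribution. Since $W_j > 0$, the coefficient $W_j(\beta S_j - \gamma)$ is negative precisely when $S_j < \gamma/\beta$. Using the definition of $\beta$,
\begin{equation*}
\frac{\gamma}{\beta} = \frac{1 + (e^\delta-1)\gamma}{e^\delta} = e^{-\delta} + \gamma(1 - e^{-\delta}) = \gamma + (1-\gamma)e^{-\delta},
\end{equation*}
so the hypothesis $S_j < \gamma + (1-\gamma)e^{-\delta}$ is exactly equivalent to $\beta S_j - \gamma < 0$. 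Token $y_j$ therefore contributes a strictly negative term to the numerator of $L(\mathcal{I})$ while still injecting a strictly positive $W_j^2$ into the denominator.

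The third step is a monotonicity argument on this ratio. Writing $N = \sum_{i \in \mathcal{I}\setminus\{j\}} W_i(\beta S_i - \gamma)$ and $B = \sum_{i \in \mathcal{I}\setminus\{j\}} W_i^2$, and setting $a = W_j(\beta S_j - \gamma) < 0$ and $b = W_j^2 > 0$, the claim becomes
\begin{equation*}
\frac{N}{\sqrt{\gamma(1-\gamma)\, B}} \;>\; \frac{N + a}{\sqrt{\gamma(1-\gamma)(B+b)}}.
\end{equation*}
Under the natural assumption $N \geq 0$ (the retained tokens already produce a non-negative watermark signal in the bound), this follows by cross-multiplying the positive denominators and chaining $N\sqrt{B+b} \geq N\sqrt{B} > (N+a)\sqrt{B}$, using $\sqrt{B+b} > \sqrt{B}$ and $a < 0$.

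The step I expect to be the main obstacle is step two: without the clean identity $\gamma/\beta = \gamma + (1-\gamma)e^{-\delta}$, the entropy threshold in the theorem statement looks unrelated to the sign change of the per-token contribution, and the monotonicity conclusion in step three would not apply. Everything else reduces to elementary algebra, with the only real subtlety being the explicit acknowledgement of $N \geq 0$, which should be stated so that the argument does not silently fail on pathological sequences where the aggregated signal on the retained tokens is already negative.
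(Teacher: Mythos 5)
Your proposal is correct and follows essentially the same route as the paper: the paper takes Lemma~\ref{lemma:1} as an imported result from \cite{kirchenbauer2023watermark} without proof, and the proof it attaches after this statement is precisely your argument for Theorem~\ref{theorem:1} --- lower-bound the expected numerator by $\sum_i W_i(\beta S_i - \gamma)$ via the lemma, note that the excluded low-entropy token contributes a negative term to that numerator while adding a positive $W_j^2$ to the denominator, and chain the two resulting inequalities on the ratio. The only differences are in your favor: you work out the identity $\gamma/\beta = \gamma + (1-\gamma)e^{-\delta}$ that the paper leaves implicit when matching the critical spike-entropy condition of Eq.~\ref{eq:critical_spike_entropy} to the sign change of $\beta S_j - \gamma$, and you state explicitly the hypothesis $N \ge 0$ on the retained signal, which the paper's first inequality also requires but which it asserts (``then $L(\mathcal{I}) > 0$'') without justification.
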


\begin{proof}
    Let the generated token sequence be $y = \{y_0,\dots,y_{N-1}\}$. The $\myalgo$ detection method partitions the set of token indices $\mathcal{N}=\{0, \dots, N-1\}$ based on an entropy threshold $\tau$ into a high-entropy set $\mathcal{I} = \{i \in \mathcal{N} \mid S_i > \tau \}$ and a low-entropy set $\mathcal{J} = \{i \in \mathcal{N} \mid S_i \leq \tau \}$.

    The $z$-score statistic for a generic set of indices $\mathcal{M} \subseteq \mathcal{N}$ is given by:
    \begin{equation}
        z(\mathcal{M}) = \frac{\sum_{i \in \mathcal{M}} W_i \mathbb{I}_{i \in \mathcal{G}} - \gamma \sum_{i \in \mathcal{M}} W_i}{\sqrt{\gamma (1-\gamma) \sum_{i \in \mathcal{M}} W_i^2}} \nonumber
    \end{equation}
    where $W_i$ are token weights and $\mathbb{I}_{i \in \mathcal{G}}$ is the indicator function for the token being in the green list. The $\ewd$ method uses the full set $\mathcal{M} = \mathcal{I} \cup \mathcal{J}$, while $\myalgo$ uses only the high-entropy set $\mathcal{M} = \mathcal{I}$.
    
    Using Lemma~\ref{lemma:1}, we can establish a lower bound on the expected $z$-score by analyzing its numerator and denominator. The expected numerator for a set $\mathcal{M}$ is:
    \begin{equation}
        \mathbb{E}\left[ \text{Num}(\mathcal{M}) \right] = \sum_{i \in \mathcal{M}} W_i (\mathbb{P}[y_i \in \mathcal{G}] - \gamma) \ge \sum_{i \in \mathcal{M}} W_i (\beta S_i - \gamma) \nonumber
    \end{equation}
    Let's denote the lower bound on the signal from a set $\mathcal{M}$ as $L(\mathcal{M}) = \sum_{i \in \mathcal{M}} W_i (\beta S_i - \gamma)$. The condition in Theorem~\ref{theorem:1} establishes that for any token $y_j$ in the low-entropy set $\mathcal{J}$, the term $(\beta S_j - \gamma)$ is negative. Consequently, the total contribution from the low-entropy set to the signal's lower bound, $L(\mathcal{J})$, is also negative, then $L(\mathcal{I}) > 0$.
    
    We now compare the $z$-score lower bounds for $\ewd$ and $\myalgo$.
    \begin{equation}
        z_{\ewd} \ge \frac{L(\mathcal{I} \cup \mathcal{J})}{\sqrt{\gamma (1-\gamma) \sum_{i \in \mathcal{I} \cup \mathcal{J}} W_i^2}} \quad \text{and} \quad z_{\myalgo} \ge \frac{L(\mathcal{I})}{\sqrt{\gamma (1-\gamma) \sum_{i \in \mathcal{I}} W_i^2}} \nonumber
    \end{equation}
    
     For the denominator, we have $D(\mathcal{I} \cup \mathcal{J})^2 = D(\mathcal{I})^2 + D(\mathcal{J})^2$; since $D(\mathcal{J})^2 > 0$, the denominator for $\myalgo$ is strictly smaller, $D(\mathcal{I}) < D(\mathcal{I} \cup \mathcal{J})$. These facts allow us to construct the following chain of inequalities:
    $$Z_{\myalgo} = \frac{L(\mathcal{I})}{D(\mathcal{I})} > \frac{L(\mathcal{I})}{D(\mathcal{I} \cup \mathcal{J})} > \frac{L(\mathcal{I}) + L(\mathcal{J})}{D(\mathcal{I} \cup \mathcal{J})} = Z_{\ewd}$$
\end{proof}

\section{Case Study}
\begin{figure}[htbp]
    \centering 
    \includegraphics[width=0.95\textwidth]{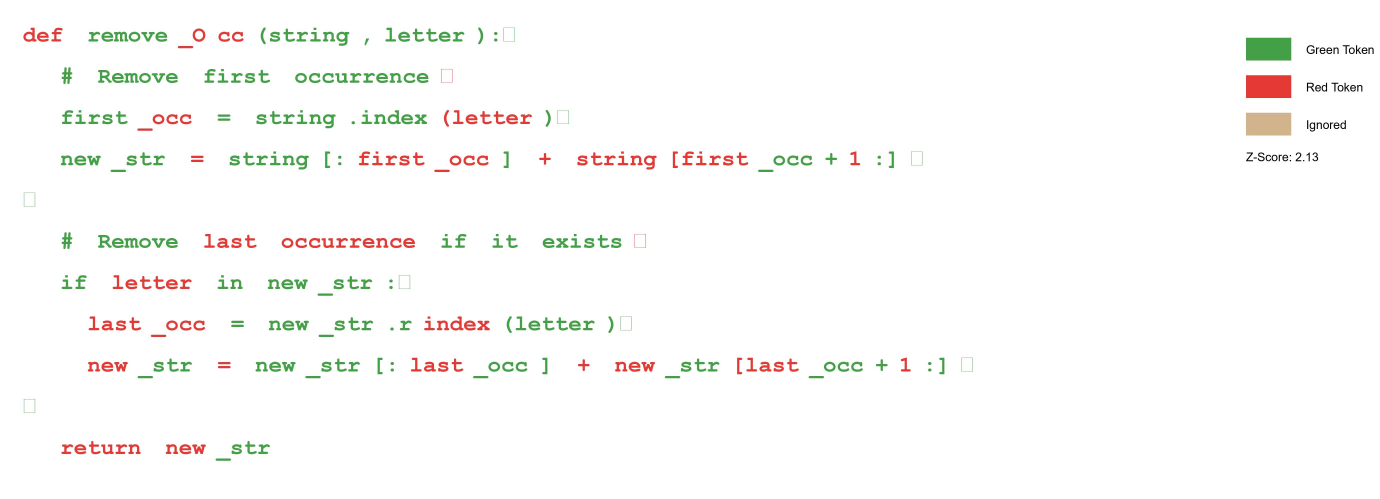}
    \caption{KGW-MBPP}
    \label{fig:KGW_mbpp}
\end{figure}
\begin{figure}[htbp]
    \centering 
    \includegraphics[width=0.95\textwidth]{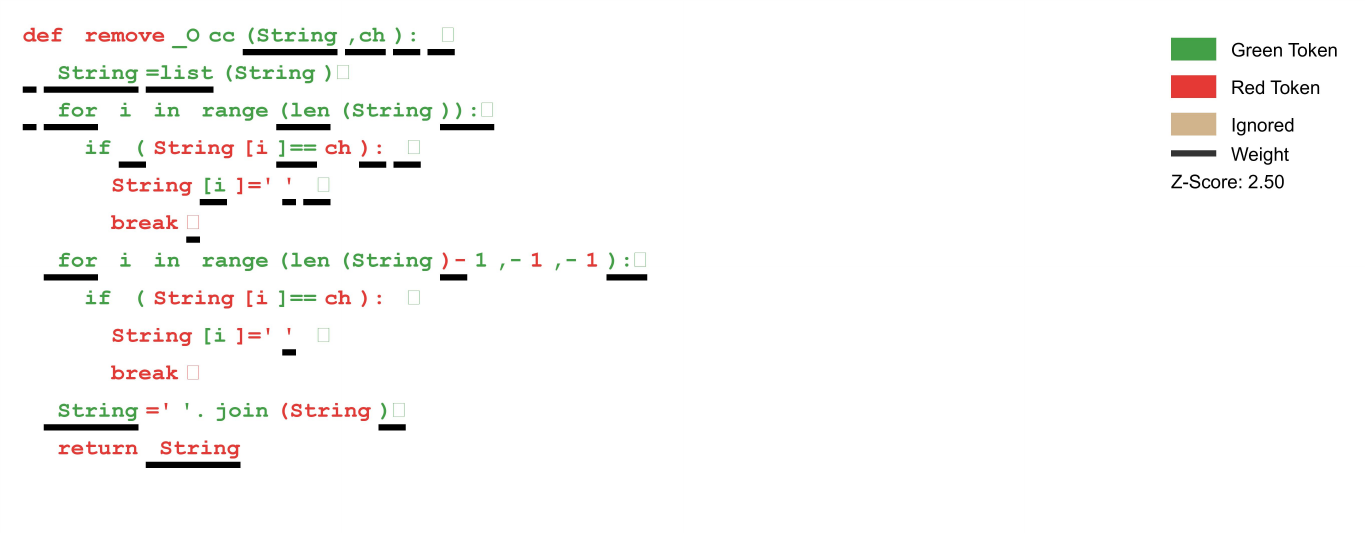}
    \caption{SWEET-MBPP}
    \label{fig:SWEET_mbpp}
\end{figure}
\begin{figure}[htbp]
    \centering 
    \includegraphics[width=0.95\textwidth]{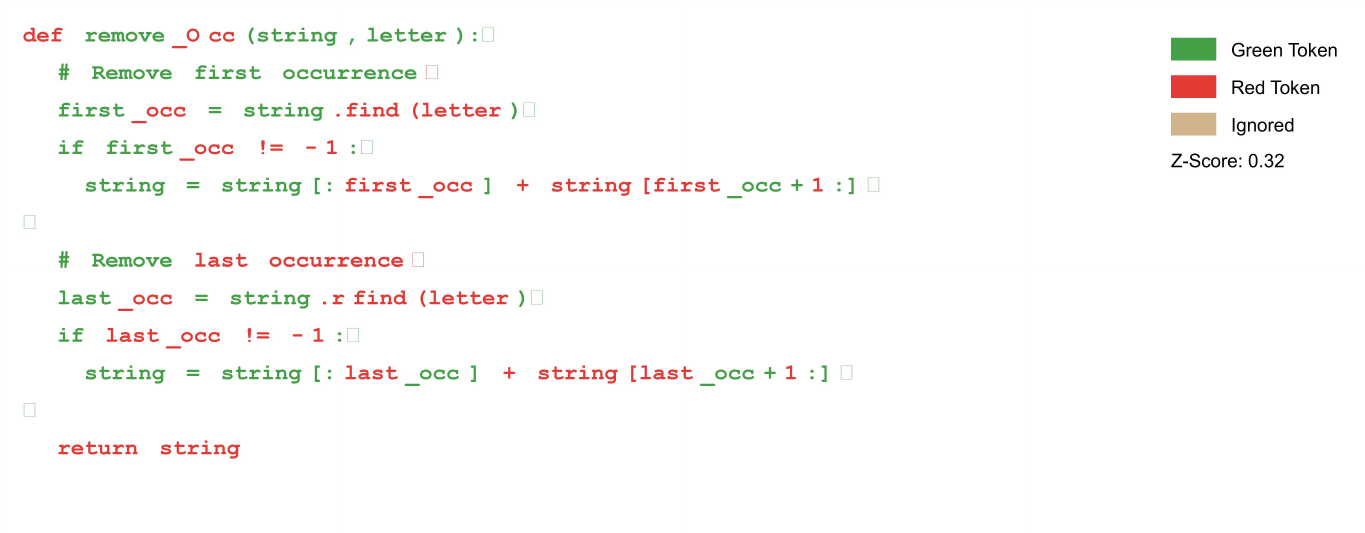}
    \caption{EWD-MBPP}
    \label{fig:EWD_mbpp}
\end{figure}
\begin{figure}[htbp]
    \centering 
    \includegraphics[width=0.95\textwidth]{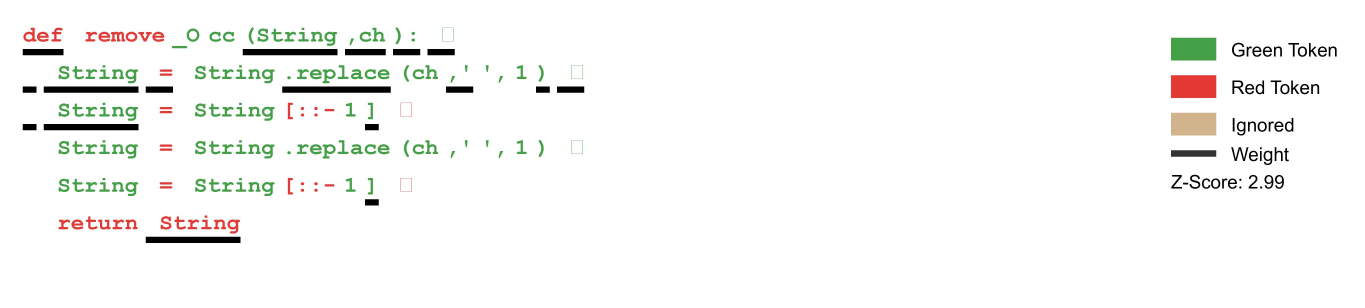}
    \caption{CAT-MBPP}
    \label{fig:CAT_mbpp}
\end{figure}
\begin{figure}[htbp]
    \centering 
    \includegraphics[width=0.95\textwidth]{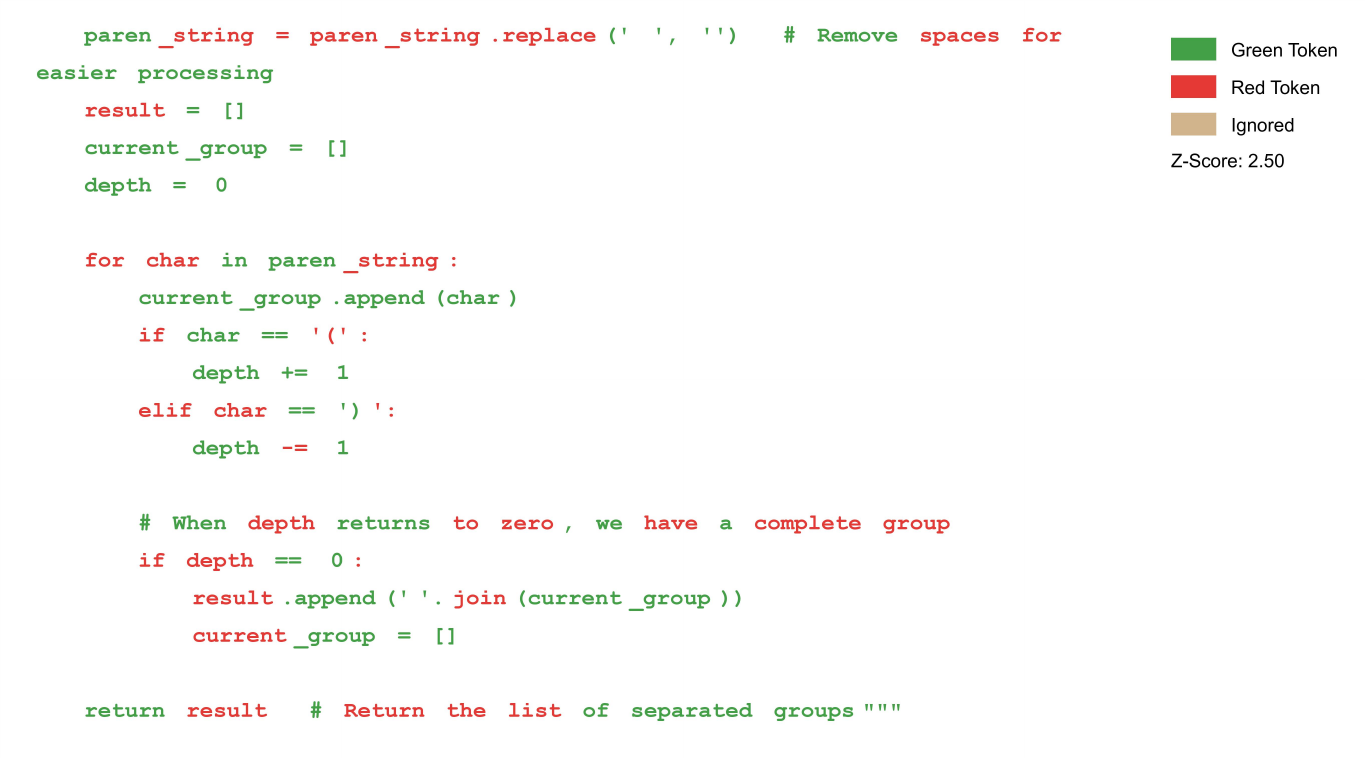}
    \caption{KGW-HumanEval}
    \label{fig:KGW_humaneval}
\end{figure}
\begin{figure}[htbp]
    \centering 
    \includegraphics[width=0.95\textwidth]{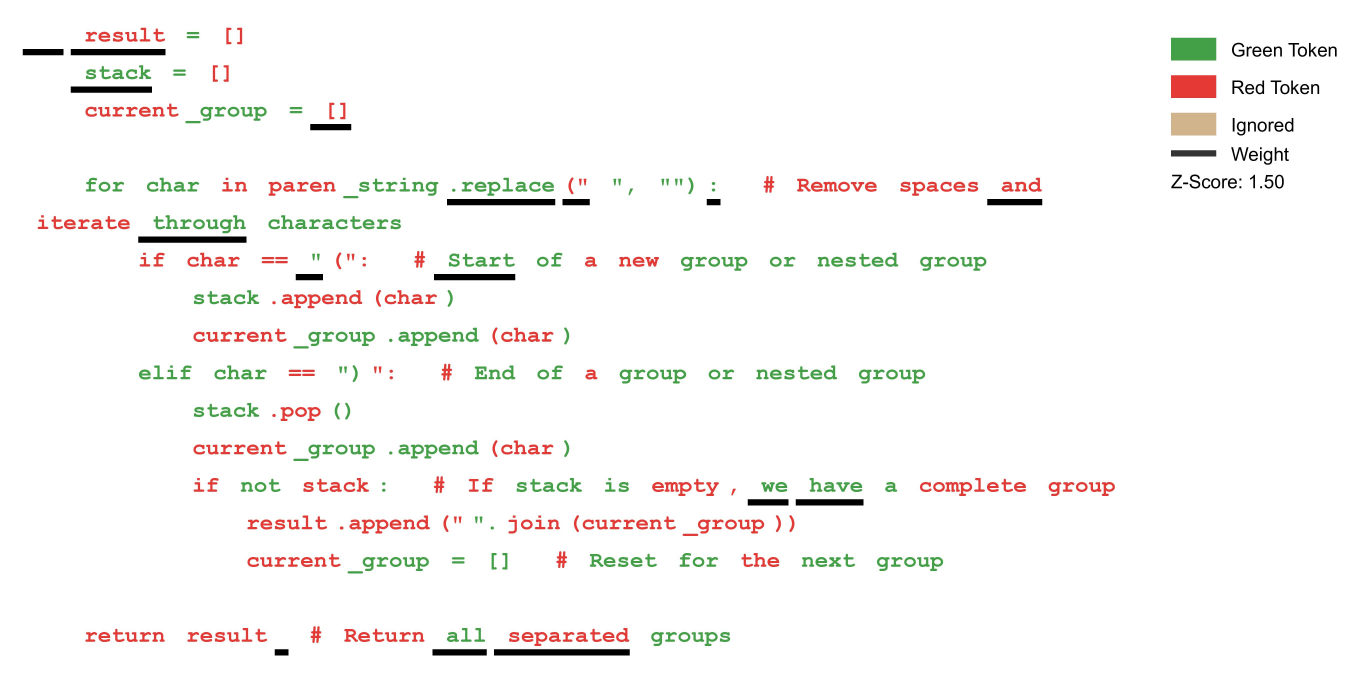}
    \caption{SWEET-HumanEval}
    \label{fig:SWEET_humaneval}
\end{figure}
\begin{figure}[htbp]
    \centering 
    \includegraphics[width=0.95\textwidth]{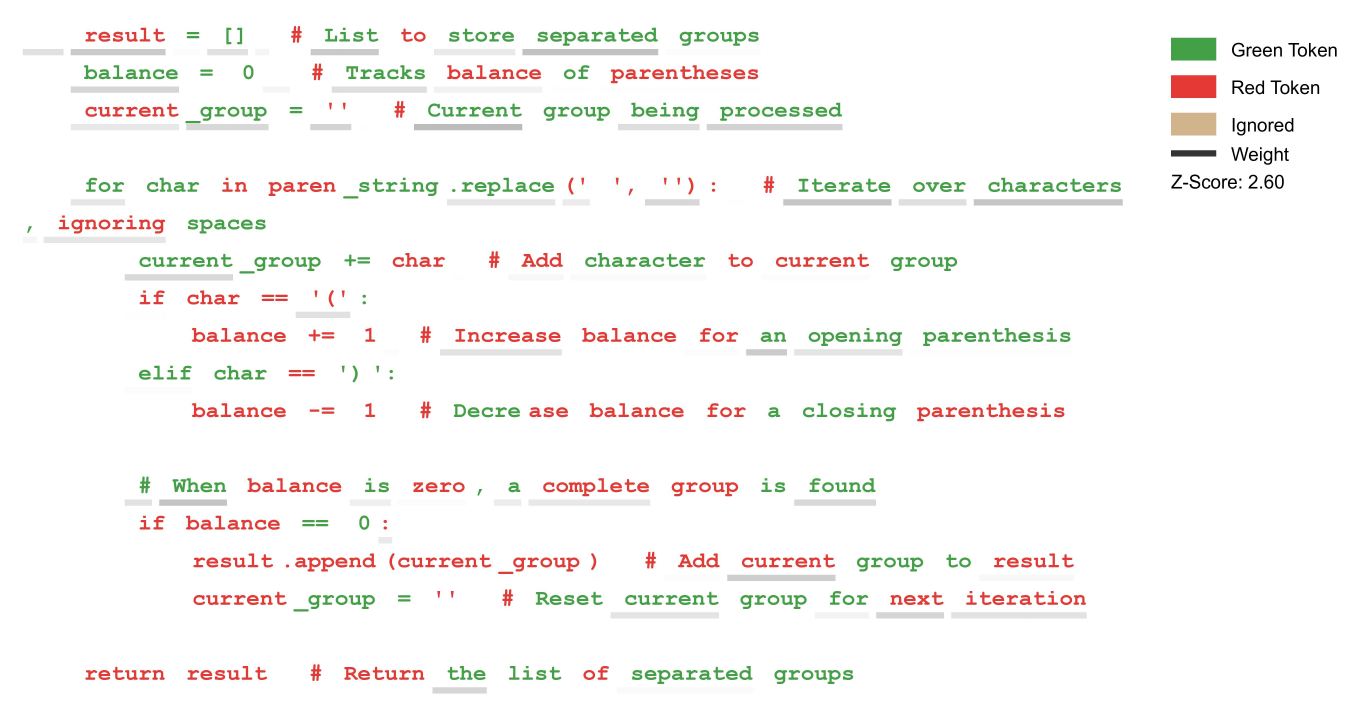}
    \caption{EWD-HumanEval}
    \label{fig:EWD_humaneval}
\end{figure}
\begin{figure}[htbp]
    \centering 
    \includegraphics[width=0.95\textwidth]{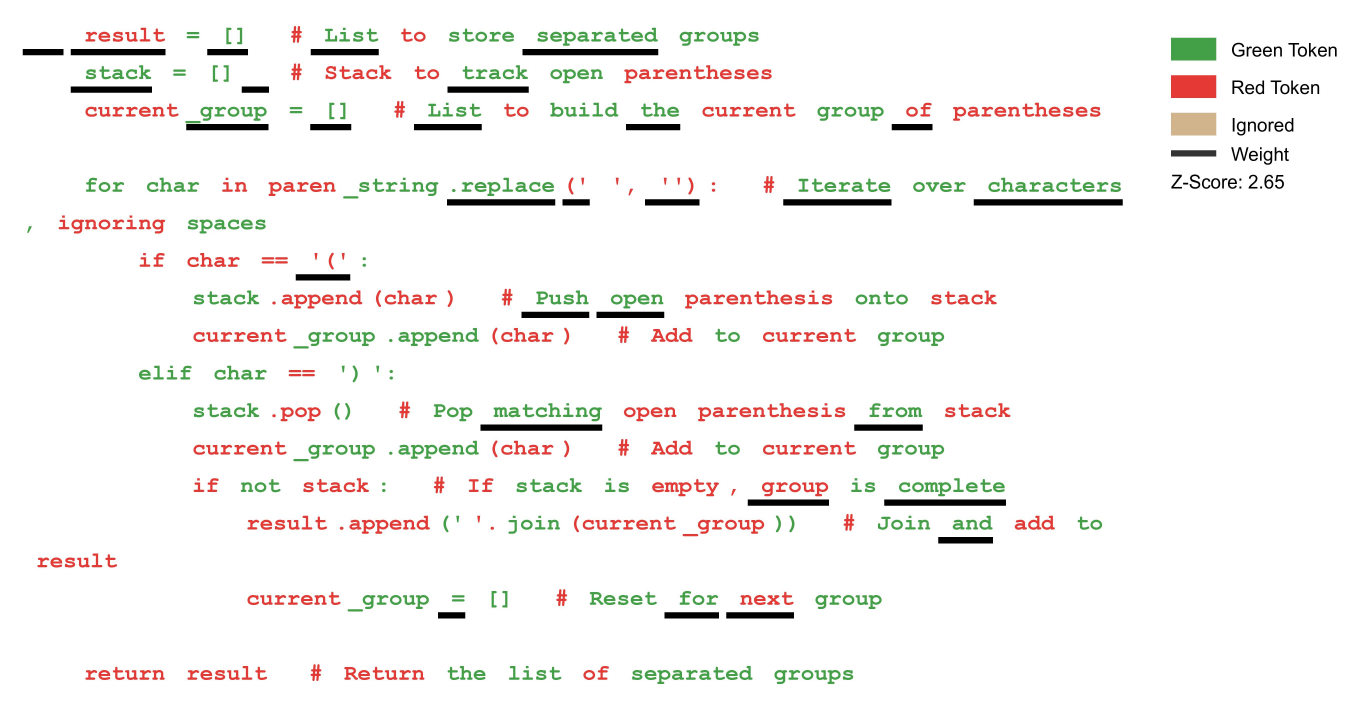}
    \caption{CAT-HumanEval}
    \label{fig:CAT_humaneval}
\end{figure}

To more intuitively illustrate the effectiveness of our watermarking algorithm, Figures~\ref{fig:KGW_mbpp}, \ref{fig:SWEET_mbpp}, \ref{fig:EWD_mbpp}, \ref{fig:CAT_mbpp}, \ref{fig:KGW_humaneval}, \ref{fig:SWEET_humaneval}, \ref{fig:EWD_humaneval}, and~\ref{fig:CAT_humaneval} present a side-by-side comparison of watermark embedding and detection across different methods on the same task. The brown shaded regions, labeled as “ignored,” represent the prompt. The black lines beneath each token indicate the weight assigned to that token during z-score computation. We use color intensity to represent the magnitude of these weights, where darker shades correspond to higher weights.

\begin{tcolorbox}[promptboxstyle, title={MBPP-Prompt}]

\begin{tcolorbox}[promptstyle, title={Prompt 1: Alphanumeric Check}]
Write a function to check whether the given string is ending with only alphanumeric characters or not using regex.
\end{tcolorbox}

\begin{tcolorbox}[examplestyle, title={Test Cases}]
\begin{verbatim}
assert check_alphanumeric("dawood@") == 'Discard'
assert check_alphanumeric("skdmsam326") == 'Accept'
assert check_alphanumeric("cooltricks@") == 'Discard'
\end{verbatim}
\end{tcolorbox}

\begin{lstlisting}[style=pythonstyle]
import re 
regex = '[a-zA-z0-9]$'
def check_alphanumeric(string): 
	if(re.search(regex, string)): 
		return ("Accept") 
	else: 
		return ("Discard")
\end{lstlisting}

\tcbbreak

\begin{tcolorbox}[promptstyle, title={Prompt 2: Even Word Length Check}]
Write a python function to check whether the length of the word is even or not.
\end{tcolorbox}

\begin{tcolorbox}[examplestyle, title={Test Cases}]
\begin{verbatim}
assert word_len("program") == False
assert word_len("solution") == True
assert word_len("data") == True
\end{verbatim}
\end{tcolorbox}

\begin{lstlisting}[style=pythonstyle]
def word_len(s): 
    s = s.split(' ')   
    for word in s:    
        if len(word)%2==0: 
            return True  
        else:
            return False
\end{lstlisting}

\tcbbreak

\begin{tcolorbox}[promptstyle, title={Prompt 3: Find First Odd Number}]
Write a python function to find the first odd number in a given list of numbers.
\end{tcolorbox}

\begin{tcolorbox}[examplestyle, title={Test Cases}]
\begin{verbatim}
assert first_odd([1,3,5]) == 1
assert first_odd([2,4,1,3]) == 1
assert first_odd ([8,9,1]) == 9
\end{verbatim}
\end{tcolorbox}

\begin{lstlisting}[style=pythonstyle]
def first_odd(nums):
    first_odd = next((el for el in nums if el%2!=0),-1)
    return first_odd
\end{lstlisting}

\tcbbreak

\begin{tcolorbox}[promptstyle, title={Prompt 4: Remove First and Last Occurrence}]
Generate the function with comments after the docstring.

Requirements:
- Only output the function (no docstring, test cases) with comments.
- Place clear, concise English comments above each logical block of code (not inline).
- Keep comments between 5–15 words.
- Avoid redundancy or obvious descriptions.
- Focus on explaining why something is done, not just what.
- Do not generate any additional text after the code.

Write a python function to remove first and last occurrence of a given character from the string.
\end{tcolorbox}

\begin{tcolorbox}[examplestyle, title={Test Cases}]
\begin{verbatim}
assert remove_Occ("hello","l") == "heo"
assert remove_Occ("abcda","a") == "bcd"
assert remove_Occ("PHP","P") == "H"
\end{verbatim}
\end{tcolorbox}

\end{tcolorbox}

\begin{tcolorbox}[promptboxstyle, title={HumanEval-Prompt}]

\begin{tcolorbox}[promptstyle, title={Prompt: Generate Function Body with Comments}]
Generate the function body for the following function, adhering to the requirements listed below.
\begin{lstlisting}[style=pythonstyle]
from typing import List

def separate_paren_groups(paren_string: str) -> List[str]:
    """ Input to this function is a string containing multiple groups of nested parentheses. Your goal is to
    separate those group into separate strings and return the list of those.
    Separate groups are balanced (each open brace is properly closed) and not nested within each other
    Ignore any spaces in the input string.
    """
\end{lstlisting}
\textbf{Requirements:}
\begin{itemize}
    \item Only output the function body (no docstring, test cases) with comments.
    \item Place clear, concise English comments above each logical block of code (not inline).
    \item Keep comments between 5–15 words.
    \item Avoid redundancy or obvious descriptions.
    \item Focus on explaining why something is done, not just what.
    \item Do not generate any additional text after the code.
\end{itemize}
\end{tcolorbox}

\begin{tcolorbox}[examplestyle, title={Example from Docstring}]
\begin{verbatim}
>>> separate_paren_groups('( ) (( )) (( )( ))')
['()', '(())', '(()())']
\end{verbatim}
\end{tcolorbox}

\end{tcolorbox}

\section{Detailed Prompts for Experiments}
\label{sec:prompt}
\subsection{Prompt for Code Comments Generation}
\label{sec:prompt:comment}
\begin{tcolorbox}[promptboxstyle, title=Prompt for Generating Reference Code Comments]

You are a \textbf{professional code reviewer}. Your task is to add clear, line-by-line English comments to the given Python function implementation.
\bigskip
\hrule
\bigskip
\textbf{Each comment must:}
\begin{enumerate}[label=\arabic*., topsep=0.5em, itemsep=0.2em]
    \item Explain what the line does (semantics)
    \item Clarify why it's needed (intent)
    \item Highlight any non-obvious logic or assumptions
\end{enumerate}

\textbf{Guidelines:}
\begin{itemize}[leftmargin=*, topsep=0.5em, itemsep=0.2em]
    \item Be concise and precise (5–15 words per comment)
    \item Use consistent style and terminology
    \item Avoid redundancy and obvious descriptions
    \item Follow PEP8 commenting conventions
    \item Place each comment on its own line \textit{above} the corresponding code
    \item Prefix each comment with \texttt{\#}
    \item Reflect the code accurately — no extra interpretation or added text
\end{itemize}

\textbf{Output Instructions:}
\begin{itemize}[leftmargin=*, topsep=0.5em, itemsep=0.2em]
    \item Do \textbf{NOT} include any part of the original prompt in your output.
    \item \textbf{Only} return the solution with comments added, nothing else.
\end{itemize}
\bigskip
\hrule
\bigskip
\textbf{For example:}

\begin{tcolorbox}[colback=gray!5, colframe=gray!60, sharp corners, title=Given Input]
\begin{lstlisting}[style=pythonstyle]
from typing import List

# <original_prompt>
def has_close_elements(numbers: List[float], threshold: float) -> bool:
    """ Check if in given list of numbers, are any two numbers closer to each other than
    given threshold.
    >>> has_close_elements([1.0, 2.0, 3.0], 0.5)
    False
    >>> has_close_elements([1.0, 2.8, 3.0, 4.0, 5.0, 2.0], 0.3)
    True
    """
# </original_prompt>

# <solution>
    for idx, elem in enumerate(numbers):
        for idx2, elem2 in enumerate(numbers):
            if idx != idx2:
                distance = abs(elem - elem2)
                if distance < threshold:
                    return True

    return False
# </solution>
\end{lstlisting}
\end{tcolorbox}

After being added comment, you should \textbf{only} return:
\begin{tcolorbox}[colback=green!5!white, colframe=green!60!black, sharp corners, title=Expected Output]
\begin{lstlisting}[style=pythonstyle]
# Iterate over each element and its index in the list
for idx, elem in enumerate(numbers):
    # Iterate again to compare the current element with all others
    for idx2, elem2 in enumerate(numbers):
        # Ensure comparison is between different elements
        if idx != idx2:
            # Calculate the absolute difference between two elements
            distance = abs(elem - elem2)
            # Check if this difference is smaller than the threshold
            if distance < threshold:
                # Close pair found, return True immediately
                return True

# Return False if no elements are closer than the threshold
return False
\end{lstlisting}
\end{tcolorbox}

\end{tcolorbox}

\begin{tcolorbox}[promptboxstyle, title=Prompt for Generating Commented Function Body]

Generate the function body of the following function with comments after the docstring:
\textbf{Requirements:}
\begin{itemize}[leftmargin=*, topsep=0.5em, itemsep=0.2em]
    \item Only output the function body (no function signature, docstring, or test cases).
    \item Place clear, concise English comments \textit{above} each logical block of code (not inline).
    \item Keep comments between 5–15 words.
    \item Avoid redundancy or obvious descriptions.
    \item Focus on explaining \textbf{why} something is done, not just what.
    \item Do not generate any additional text after the code.
\end{itemize}
\end{tcolorbox}

\subsection{Prompt for MATH-500 Reasoning}
\label{sec:prompt:math}
\begin{tcolorbox}[
    promptstyle,
    title=Math Reasoning Prompt
]

\textbf{Problem:}\\
%
What is the value of the expression $\frac{2025}{1 + \frac{1}{1 + \frac{1}{2025}}}$?
\\[0.5em]

\noindent\textbf{Instructions:}\\
You are a helpful assistant that solves math problems step by step.
Always conclude with the final answer in \texttt{\textbackslash boxed\{\}}.
Here's an example of how to solve a problem:

\begin{tcolorbox}[examplestyle, title=Example]
    \textbf{Problem:}\\
    What is the area of the region defined by the equation $x^2+y^2 - 7 = 4y-14x+3$?
    \\[0.5em]
    \textbf{Solution:}\\
    Let's think step by step. \\
    We rewrite the equation as $x^2 + 14x + y^2 - 4y = 10$ and then complete the square,
    resulting in $(x+7)^2-49 + (y-2)^2-4=10$,
    or $(x+7)^2+(y-2)^2=63$.
    This is the equation of a circle with center $(-7, 2)$ and radius $\sqrt{63}$.
    The area of this region is $\pi r^2 = 63\pi$.
    So the final answer is $\boxed{63\pi}$.
\end{tcolorbox}

\noindent\textbf{Solution:}\\
Let's think step by step.

\end{tcolorbox}

\subsection{Prompt for StackEval}
\label{sec:prompt:stackeval}
\begin{tcolorbox}[
    colback=gray!5!white,
    colframe=gray!75!black,
    title=LLM-as-Judge Evaluation Prompt for StackEval,
    fonttitle=\bfseries,
    breakable 
]

You are a very experienced and knowledgeable answer checker. You will be given a question, a reference answer and an LLM generated answer. Your task is to evaluate how good the answer is in answering the question of the user. More specifically, you will evaluate the acceptability of the answer for the user following the definition and rubric below.

\paragraph*{Acceptability Definition}
Acceptability measures how effectively an answer satisfies the user's specific requirements and addresses their issue. It evaluates whether the response provides a viable solution, focusing on the answer's accuracy, relevance, and completeness. An acceptable answer is one that the user would regard as a fitting resolution to their query. An acceptable answer enables the user to proceed without requiring additional help or verification. An acceptable answer may not be perfect and may contain small inaccuracies that will not affect the usability of the provided answer. For example, if the answer is code, it must work without any user editing. If it is an advice, it must cover most crucial points.

\paragraph*{Acceptability Evaluation Rubric}
Choose the most suitable category from the four-tiered scale provided to assess the acceptability of the response:
\begin{description}[font=\bfseries, style=nextline, itemsep=0.5em]
    \item[Score 0 (Completely Unacceptable):]~
    \begin{itemize}[leftmargin=*, itemsep=0.2em, topsep=0.3em]
        \item The answer is incorrect or entirely irrelevant, with substantial errors and no viable solution to the user's problem.
        \item Contains severe hallucinations or misinformation, significantly misleading the user.
        \item Leaves significant gaps, necessitating further search for information.
        \item The user would immediately disregard this answer and continue searching for a better solution.
    \end{itemize}

    \item[Score 1 (Useful but Unacceptable):]~
    \begin{itemize}[leftmargin=*, itemsep=0.2em, topsep=0.3em]
        \item Contains some correct information but also significant inaccuracies or lacks important details, prompting additional research.
        \item Somewhat relevant but misses critical nuances, leading to an incomplete understanding.
        \item Not comprehensive, omitting important aspects and critical details needed to solve the user's problem.
        \item Provides some value but requires further searching for a complete and satisfactory solution.
    \end{itemize}

    \item[Score 2 (Acceptable):]~
    \begin{itemize}[leftmargin=*, itemsep=0.2em, topsep=0.3em]
        \item Accurate, with correct information and guidance, free of critical errors that would prevent problem resolution.
        \item Relevant and demonstrates a clear understanding of the issue, addressing the main points and considerations, and directly applicable to the problem.
        \item Sufficiently complete, offering a satisfactory solution, even if it is not the most optimal solution, or a clear solution template that users can easily adapt. Minor details may be omitted, but nothing vital is missing.
        \item Provides enough information for the most user to proceed without additional help, even if some user-specific details need to be filled in. For example, it is ok if it has some examples URLs or templates to fill in with user data.
    \end{itemize}
    
    \item[Score 3 (Optimal):]~
    \begin{itemize}[leftmargin=*, itemsep=0.2em, topsep=0.3em]
        \item The answer is 100\% accurate and provides a detailed response, where the details improve answers quality and usability, with guidance that is specific and helpful for the user's particular issue.
        \item It is thorough, addressing not just the basic question but also touching on additional relevant aspects that could enhance the user's understanding of the solution.
        \item The response may include extra information, such as best practices or helpful tips, that adds value and could assist the user in avoiding common mistakes or in understanding the broader context.
        \item The user is likely to feel well-informed and be able to apply the solution effectively, with the answer being considered as reliable and optimal solution.
    \end{itemize}
\end{description}
\vspace{1em}
\noindent\textbf{Attention:} It is crucial to understand the threshold between Score 1 and Score 2: Score 1 is useful but unacceptable, where the answer provides some correct information but lacks completeness and desired accuracy, requiring the user to seek further information for most users, whereas Score 2 is acceptable, even if it is not perfect or optimal, offering accurate, relevant, and sufficiently complete information that allows the user to resolve their issue without needing additional resources.

\paragraph*{Assessment Guidelines}
\begin{enumerate}
    \item Analyze the question and reference answer to pinpoint the core requirements for an acceptable answer to the user.
    \item Carefully evaluate the generated answer for the given question by taking into account question's requirements and reference answer. The reference answer usually have solid points but they may not be the only way of solution.
    \item Reason on the acceptability of the generated answer, analyze how acceptable the generated answer is. In the end of this reasoning, write 1 line of decision about its acceptability based on the definition and rubric above without a score.
    \item Give your acceptability score based on all the observations above. Ensure your evaluation results are formatted into a valid JSON object.
\end{enumerate}

\paragraph*{Output Format}
Ensure your evaluation results are formatted into a valid JSON object as outlined below:
\begin{lstlisting}[style=jsonstyle]
{
    "questionAnalysis": "<str, Review the question to understand what core elements an LLM generated answer must include to satisfy the user>",
    "generatedAnswerAnalysis": "<str, Review the LLM-generated answer considering how good it covers the core elements in the questionAnalysis above, identifying both strengths and weaknesses. Highlight accurate, valuable aspects and pinpoint inaccuracies or irrelevant details.>",
    "acceptabilityEvaluation": "<str, Assess how well the generated answer meets the user's needs based on its accuracy, relevance, and completeness following the previous accuracy definition and rubric.>",
    "acceptabilityScore": "<int, Following the acceptabilityEvaluation, assign the most appropriate score from the acceptability rubric (0, 1, 2 or 3), be very accurate.>"
}
\end{lstlisting}

\subsection*{Inputs}
\textbf{User Question}\\
\texttt{\{\{question\}\}}\\[1em]
\textbf{Reference Answer}\\
\texttt{\{\{answer\}\}}\\[1em]
\textbf{LLM-Generated Answer}\\
\texttt{\{\{completion\}\}}

\end{tcolorbox}

\end{document}